\renewcommand{\sd}{\mathrm{d}}
\begin{document}
\title{Bloch Theory for Periodic Block Spin Transformations}

\author{Tadeusz Balaban}
\affil{\small Department of Mathematics \authorcr
       % Hill Center--Busch Campus \authorcr
       Rutgers, The State University of New Jersey \authorcr
     %  110 Frelinghuysen Rd \authorcr
      % Piscataway, NJ 08854-8019 \authorcr
       tbalaban@math.rutgers.edu\authorcr
       \  }

\author{Joel Feldman\thanks{Research supported in part by the Natural 
                Sciences and Engineering Research Council 
                of Canada and the Forschungsinstitut f\"ur 
                Mathematik, ETH Z\"urich.}}
\affil{Department of Mathematics \authorcr
       University of British Columbia \authorcr
     %  Vancouver, B.C.,   %\authorcr
      %  Canada\ \   V6T 1Z2 \authorcr
       feldman@math.ubc.ca \authorcr
       http:/\hskip-3pt/www.math.ubc.ca/\squig feldman/\authorcr
       \  }

\author{Horst Kn\"orrer}
\author{Eugene Trubowitz}
\affil{Mathematik \authorcr
       ETH-Z\"urich \authorcr
      %  ETH-Zentrum \authorcr
      % CH-8092 Z\"urich, %\authorcr
      %  Switzerland \authorcr
       knoerrer@math.ethz.ch, trub@math.ethz.ch \authorcr
       http:/\hskip-3pt/www.math.ethz.ch/\squig knoerrer/}

%\date{\editdate}

\maketitle

\begin{abstract}
\noindent
Block spin renormalization group is the main tool  used in our  program 
to see symmetry breaking in a weakly interacting many Boson system 
on a three dimensional lattice at low temperature. 
It generates operators, like the fluctuation integral covariance, that act 
on some lattice but are translation invariant only with respect to a proper 
sublattice. This paper constructs a Bloch/Floquet framework that is 
appropriate for bounding such operators.

\end{abstract}

\newpage
\tableofcontents

\newpage
\subsection{Introduction}

One standard implementation of the renormalization group philosophy 
\cite{Wil} uses block spin transformations. See \cite{KAD,BalLausane,GK,BalPalaiseau,Dim1,BlockSpin}.
Concretely, suppose  we are to control a functional integral on a finite\footnote{Usually, 
the finite lattice is a ``volume cutoff'' infinite lattice and one wants 
to get bounds that are uniform in the size of the volume cutoff.} 
lattice $\cX_-$ of the form
\begin{equation}\label{introbstrbasiscfi}
\int \smprod_{x\in\cX_-} \sfrac{d\phi^*(x) d\phi(x)}{2\pi i}\,
 e^{A(\al_1,\cdots,\al_s;\phi^*,\phi)}
 \end{equation}
with an action $A(\al_1,\cdots,\al_s;\phi_*,\phi)$ that is a function 
of external complex valued fields $\al_1$, $\cdots$, $\al_s$,
and the two\footnote{ In the actions, we  treat $\phi$ and its complex conjugate $\phi^*$ as independent variables.} complex fields $\phi_*,\phi$ on $\cX_-$. This scenario occurs in \cite{PAR1,PAR2}, where  we use  block spin renormalization group
maps to exhibit the formation of a potential well, signalling the onset of symmetry breaking in
a many particle system of weakly interacting Bosons in three space dimensions. 
(For an overview, see \cite{ParOv}.)

Under the  renormalization group approach to controlling integrals like
\eqref{introbstrbasiscfi} one successively ``integrates out'' 
lower and lower energy degrees of freedom. In the block spin formalism this 
is implemented by considering a  decreasing sequence of sublattices of $\cX_-$.  
The formalism produces, for each such sublattice, a representation of the
integral \eqref{introbstrbasiscfi} that is a functional integral whose
integration variables are indexed by that sublattice. To pass from the representation
associated with one sublattice $\cX\subset\cX_-$, with integration variables $\psi(x)$, $x\in\cX$, to the 
representation associated to the next coarser sublattice $\cX_+\subset\cX$,
with integration variables $\th(y)$, $y\in\cX_+$, one 
\begin{itemize}[leftmargin=*, topsep=2pt, itemsep=0pt, parsep=0pt]
\item 
paves $\cX$ by rectangles centered at the points of $\cX_+$ and then, 
\item 
for each $y\in\cX_+$ integrates out all values of $\psi$ whose ``average 
value'' over the rectangle centered at $y$ is equal to $\th(y)$.
The precise ``average value'' used is determined by an averaging profile
$q$. As in \eqref{eqnBOaveop}, one uses this profile to define an averaging operator 
$Q$ from  the space $\cH$ of fields on 
 $\cX$ to the space $\cH_+$ of fields on  $\cX_+$. One then implements the
``integrating out'' by first, inserting, into the integrand, $1$ expressed as a 
constant times the Gaussian integral
\begin{align*}
&\int \smprod_{y\in\cX_+} \sfrac{d\th^*(y) d\th(y)}{2\pi i}
                                        e^{-b\< \th^*-Q\,\psi_*\,,\, \th-Q\,\psi) \>} 
\end{align*}
with some constant $b>0$, and then interchanging the order of the $\th$ and
$\psi$ integrals.
\end{itemize}
For example, in \cite{ParOv,PAR1,PAR2} the model is initially formulated as a functional
integral with integration variables indexed by a lattice\footnote{The volume cutoff 
is determined by $L_\tp$ and $L_\sp$.}
$\big(\bbbz/L_\tp\bbbz\big)\times\big(\bbbz^3/L_\sp\bbbz^3\big)$.
After $n$ renormalization group steps this lattice is scaled down
to $\cX_n= \big(\sfrac{1}{L^{2n}}\bbbz \big/ \sfrac{L_\tp}{L^{2n}}\bbbz\big)
   \times \big(\sfrac{1}{L^n}\bbbz^3 \big/\sfrac{L_\sp}{L^n}\bbbz^3 \big)$. 
The decreasing family of sublattices is
$\cX_{j}^{(n-j)} 
= \big(\sfrac{1}{L^{2j}}\bbbz \big/ \sfrac{L_\tp}{L^{2n}}\bbbz\big)
   \times \big(\sfrac{1}{L^{j}}\bbbz^3 \big/
                      \sfrac{L_\sp}{L^n}\bbbz^3 \big)$, 
$j=n$, $n-1$, $\cdots$.  The abstract lattices
$\cX_-$, $\cX$, $\cX_+$ in the above framework correspond to
$\cX_n$, $\cX_0^{(n)}$ and $\cX_{-1}^{(n+1)}$, respectively.

In this framework there are a good number of
linear operators that act on functions defined on a finite lattice
and that are translation invariant with respect to a sub--lattice.
For example the block spin averaging operator 
$Q$ above (which is an abstraction of the operator $Q$ of 
\cite[Definition \defHTblockspintr.a]{PAR1} and \cite[(\eqnPBSaveop)]{POA})
acts on functions defined on the lattice $\cX$, but is translation invariant
only with respect to the sublattice $\cX_+$.
Similarly the operator $Q_-$ of  \cite[(\introbstrstepnfi)]{BlockSpin}
(which is an abstraction of the operator $Q_n$ of 
\cite[Definition \defHTbackgrounddomaction.a]{PAR1} and \cite[(\eqnPBSqn)]{POA})
acts on functions defined on the lattice $\cX_-$, but is translation invariant
only with respect to the sublattice $\cX$.
As another example, the fluctuation integral covariance
$C$ of \cite[(\eqnBSdefCascovariance)]{BlockSpin}
(which is an abstraction of the operator $C^{(n)}$ of 
\cite[(\eqnHTcn)]{PAR1} and \cite[\S\secPOcovariance]{POA})
acts on functions defined on the lattice $\cX$, but is translation invariant
only with respect to the sublattice $\cX_+$.
In this paper, we use the Bloch/Floquet theory (see, for example, \cite{Kuchment}) approach to 
develop some general 
machinery for bounding such linear operators. 
In \cite{PAR1,PAR2} the operators of interest tend to be periodizations of operators
acting on $L^2$ of an infinite lattice. An important example is the 
``differential'' operator $D_n$. See \cite[Remark \remPDOftDn.a]{POA}. 
We also develop general machinery for bounding such periodizations.  
In \cite{POA} we use the results of this paper to bound many of the operators appearing in
\cite{PAR1,PAR2}.

\subsection{Periodic Operators in ``Position Space'' and ``Momentum Space'' Environments}
We start by setting up a general environment consisting of a ``fine''
lattice and a ``coarse'' sub--lattice. We shall consider operators that 
act on functions defined on the former and that are translation invariant with 
respect to the latter.   Let $\veps_T,\veps_X>0$, $L_T,L_X\in\bbbn$ 
and $\cL_T\in L_T\bbbn$, $\cL_X\in L_X\bbbn$ and define the (finite) 
$(\sd+1)$--dimensional lattices
\begin{align*}
\cX_\fin&= \big(\veps_T\bbbz/\veps_T\cL_T\bbbz\big)\times
            \big(\veps_X\bbbz^\sd/\veps_X\cL_X\bbbz^\sd)
\\
\cX_\crs&=\big(L_T\veps_T\bbbz/\veps_T\cL_T\bbbz\big)\times
                  \big(L_X\veps_X\bbbz^\sd/\veps_X\cL_X\bbbz^\sd)
\end{align*}
and the corresponding Hilbert spaces
\begin{align*}
\cH_f&=L^2(\cX_\fin) \qquad
     & \<\phi^*_1,\phi_2\>_f
        &=\vol_f\sum_{u\in\cX_\fin}\phi_1(u)^*\phi_2(u)\\
%%%%%%%%%%
\cH_c&=L^2(\cX_\crs)& 
\<\psi_1^*,\psi_2\>_c
&=\vol_c\sum_{x\in\cX_\crs}\psi_1(x)^*\psi_2(x)
\end{align*}
where we use
\begin{equation*}
\vol_f = \veps_T\veps_X^\sd\qquad
\vol_c = (\veps_TL_T)(\veps_XL_X)^\sd\qquad
\end{equation*}
to denote the volume of a single cell in $\cX_\fin$, 
and $\cX_\crs$, respectively. For the Bloch construction,
it will also be useful to define the ``single period'' lattice
\begin{equation*}
\cB=\big(\veps_T\bbbz/L_T\veps_T\bbbz\big)\times
                  \big(\veps_X\bbbz^\sd/L_X\veps_X\bbbz^\sd)
\cong \cX_\fin/\cX_\crs
\end{equation*}

%\subsection{Fourier Transforms and the ``Momentum Space'' Environment}
The lattices dual to $\cX_\fin$, $\cX_\crs$ and 
$\cB$ are
\begin{align*}
\hat\cX_\fin
&= \big(\sfrac{2\pi}{\veps_T\cL_T}\bbbz/\sfrac{2\pi}{\veps_T}\bbbz\big)
 \times
  \big(\sfrac{2\pi}{\veps_X\cL_X}\bbbz^\sd/\sfrac{2\pi}{\veps_X}\bbbz^\sd\big)
\\
\hat\cX_\crs&= \big(\sfrac{2\pi}{\veps_T\cL_T}\bbbz/\sfrac{2\pi}{L_T\veps_T}\bbbz\big)
       \times
\big(\sfrac{2\pi}{\veps_X\cL_X}\bbbz^\sd/\sfrac{2\pi}{L_X\veps_X}\bbbz^\sd\big)
\\
\hat\cB&= 
\big(\sfrac{2\pi}{L_T\veps_T}\bbbz/\sfrac{2\pi}{\veps_T}\bbbz\big)
   \times
\big(\sfrac{2\pi}{L_X\veps_X}\bbbz^\sd/\sfrac{2\pi}{\veps_X}\bbbz^\sd\big)
\cong \hat\cX_\crs/\hat\cX_\fin
\end{align*}
We denote by
\begin{equation*}
\hat\pi:\hat\cX_\fin\rightarrow \hat\cX_\crs
\end{equation*}
the canonical projection from $\hat\cX_\fin$ to 
$\hat\cX_\crs$. It has kernel $\hat\cB$.
Observe that
\begin{equation*}
p\cdot x=\hat\pi(p)\cdot x\ {\rm mod}\, 2\pi\qquad
\text{for all }x\in \cX_\crs,\ p\in \hat\cX_\fin
\end{equation*}
The Fourier and inverse Fourier transforms are, for
$\phi\in\cH_f$, $\psi\in\cH_c$, $\ze\in L^2(\cB)$,
$p\in\hat\cX_\fin$, $k\in \hat\cX_\crs$, 
$\ell\in\hat\cB$, $u\in\cX_\fin$,
$x\in\cX_\crs$ and $w\in\cB$,
\begin{align*}
\hat \phi(p)
 &=\vol_f\sum\limits_{u\in\cX_\fin} \phi(u) e^{-i p\cdot u }
\quad &
\phi(u)&=\sfrac{\hvol_f}{(2\pi)^{1+\sd}}
    \sum_{p\in\hat\cX_\fin}  \hat \phi(p) e^{i u\cdot p }
\\%%%%%%
\hat \psi(k)&=\vol_c\sum_{x\in\cX_\crs} 
                      \psi(x) e^{-i k\cdot x }
\quad &
\psi(x)
&=\sfrac{\hvol_c}{(2\pi)^{1+\sd}}\sum_{k\in\hat\cX_\crs} 
                            \hat \psi(k) e^{i k\cdot x }
\\%%%%%%
\hat \ze(\ell)
&=\vol_f\sum\limits_{w\in\cB} \ze(w) e^{-i \ell\cdot w }
\quad &
\ze(w)&=\sfrac{\hvol_b}{(2\pi)^{1+\sd}}\sum_{\ell\in\hat\cB} 
                            \hat \ze(\ell) e^{i w\cdot \ell }
\end{align*}
where
\begin{equation*}
\hvol_f = \sfrac{(2\pi)^{1+\sd}}{(\veps_T\cL_T)(\veps_X\cL_X)^\sd}\qquad
\hvol_c = \sfrac{(2\pi)^{1+\sd}}{(\veps_T\cL_T)(\veps_X\cL_X)^\sd}\qquad
\hvol_b = \sfrac{(2\pi)^{1+\sd}}{(\veps_TL_T)(\veps_XL_X)^\sd}
\end{equation*}
denote the volume of a single cell in $\hat\cX_\fin$, 
$\hat\cX_\crs$ and $\hat\cB$, respectively. Observe that
\begin{equation}\label{eqnBOvolhvol}
\begin{alignedat}{3}
\sfrac{\vol_f\hvol_f}{(2\pi)^{1+\sd}}
&=\sfrac{1}{\cL_T\cL_X^\sd}=\sfrac{1}{|\cX_\fin|}
&=\sfrac{1}{|\hat\cX_\fin|}\\
\sfrac{\vol_c\hvol_f}{(2\pi)^{1+\sd}}=\sfrac{\vol_c\hvol_c}{(2\pi)^{1+\sd}}
&=\sfrac{L_TL_X^\sd}{\cL_T\cL_X^\sd}=\sfrac{1}{|\cX_\crs|}
&=\sfrac{1}{|\hat\cX_\crs|}
\end{alignedat}
\end{equation}
where $|\cX_\crs|$ denotes the number of points in 
$\cX_\crs$. By \eqref{eqnBOvolhvol} and the fact that
$\de_{u,u'}=\sfrac{1}{|\hat\cX_\fin|}
        \sum_{p\in\hat\cX_\fin} e^{ip\cdot u}e^{-ip\cdot u'}$, 
\begin{alignat*}{3}
\<\phi_1,\phi_2\>_f
        &=\vol_f\sum_{u\in\cX_\fin}\phi_1(u)\phi_2(u)
        &=\sfrac{\hvol_f}{(2\pi)^{1+\sd}}
              \sum_{p\in\hat\cX_\fin}
                    \hat\phi_1(-p)\hat\phi_2(p)
\\
\<\psi_1,\psi_2\>_c
&=\vol_c\sum_{x\in\cX_\crs}\psi_1(x)\psi_2(x)
        &=\sfrac{\hvol_c}{(2\pi)^{1+\sd}}
              \sum_{k\in\hat\cX_\crs}
                    \hat\psi_1(-k)\hat\phi_2(k)
\end{alignat*}

%\titlec{Periodic Operators}

Let $A$ be any operator on $\cH_f$ that is translation invariant with
respect to $\cX_\crs$.  We call such an operator a ``periodic
operator''. Denote by $A(u,u')$ its kernel, defined so that
\begin{equation*}
(A\phi)(u)=\vol_f\sum_{u'\in\cX_\fin}A(u,u')\phi(u')
\end{equation*}
By ``translation invariant with respect to $\cX_\crs$'', we mean
that $A(u+x,u'+x)=A(u,u')$ for all $u,u'\in\cX_\fin$ and 
$x\in\cX_\crs$.
Set\footnote{The ``normal prefactor'' for $\hat A$ would be 
$\vol_f^2$. We have chosen $\sfrac{\vol_f}{|\cX_\fin|}
=\sfrac{\hvol_f}{(2\pi)^{1+\sd}}\vol_f^2$ so as to replace approximate 
Dirac $(2\pi)^{1+\sd}\de(p-p')$'s with simple Kronecker $\de_{p,p'}$'s
in the translation invariant case.}, 
for $p,p'\in\hat\cX_\fin$,
\begin{equation}\label{eqnBOopft}
\hat A(p,p')=\sfrac{\vol_f}{|\cX_\fin|}\sum_{u,u'\in\cX_\fin}
                 e^{-ip\cdot u}A(u,u')e^{ip'\cdot u'}
\end{equation}
and, for $u,u'\in\cX_\fin$ and $\rk\in
\sfrac{2\pi}{\veps_T\cL_T}\bbbz\times\sfrac{2\pi}{\veps_X\cL_X}\bbbz^\sd$,
the ``universal cover'' of $\hat\cX_\crs$,
\begin{equation}\label{eqnBOkerk}
A_\rk(u,u')=\vol_c\hskip-10pt
      \sum_{\atop{u''\in\cX_\fin}{u''-u'\in\cX_\crs}}
      \hskip-10pt
      e^{-i\rk\cdot u}A(u,u'')e^{i\rk\cdot u''}
\end{equation}
For each fixed $u,u'\in\cX_\fin$, $\rk\mapsto A_\rk(u,u')$
is not a function on the torus $\hat\cX_\crs$ since,
for $\rp\in\sfrac{2\pi}{L_T\veps_T}\bbbz
           \times \sfrac{2\pi}{L_X\veps_X}\bbbz^\sd$,
\begin{equation*}
A_{\rk+\rp}(u,u')=e^{-i\rp(u-u')}A_\rk(u,u')
\end{equation*}
This is why we defined $A_\rk$ for $\rk\in\sfrac{2\pi}{\veps_T\cL_T}\bbbz
\times\sfrac{2\pi}{\veps_X\cL_X}\bbbz^\sd$, rather than
$\rk\in\hat\cX_\crs$.
On the other hand, $k\mapsto e^{ik(u-u')} A_k(u,u')$ is a well--defined 
function on $\hat\cX_\crs$.

The following lemma is standard.
\begin{lemma}\label{lemBOkervar}
Let $A$ be an operator on $\cH_f$ that is translation invariant with
respect to $\cX_\crs$. 
\begin{enumerate}[label=(\alph*), leftmargin=*]
\item $A(u,u')=\sfrac{\hvol_f}{(2\pi)^{1+\sd}}
           {\displaystyle\sum_{p,p'\in\hat\cX_\fin}}
                 e^{ip\cdot u}\hat A(p,p')e^{-ip'\cdot u'}$

\item 
$A(u,u')=\sfrac{\hvol_c}{(2\pi)^{1+\sd}}
       {\displaystyle\sum_{\atop{[\rk]\in\hat\cX_\crs}
                                {\ell,\ell'\in\hat\cB}}}
                 e^{i\ell\cdot u}\hat A(\rk+\ell,\rk+\ell')e^{-i\ell'\cdot u'}
                 e^{i\rk\cdot (u-u')}
$

\noindent
Here $\sum\limits_{[\rk]\in\hat\cX_\crs}f(\rk)$ means that one
sums $\rk$ over a subset of $\sfrac{2\pi}{\veps_T\cL_T}\bbbz\times\sfrac{2\pi}{\veps_X\cL_X}\bbbz^\sd$ that contains exactly one
(arbitrary) representative from each equivalence class of $\hat\cX_\crs$.
Note that if $\rk\in\sfrac{2\pi}{\veps_T\cL_T}\bbbz\times\sfrac{2\pi}{\veps_X\cL_X}\bbbz^\sd$ and $\ell\in\hat\cB$,
then $\rk+\ell\in\hat\cX_\fin$.

\item  
$\widehat{(A\phi)}(p)
={\displaystyle\sum_{p'\in\hat\cX_\fin}}\hat A(p,p')\hat\phi(p')$
for all $\phi\in \cH_f$.

\item 
For each $\rk\in\sfrac{2\pi}{\veps_T\cL_T}\bbbz
\times\sfrac{2\pi}{\veps_X\cL_X}\bbbz^\sd$, $A_\rk(u,u')$ is 
% translation invariant with respect to $\cX_\crs$ and is
periodic with respect to $\cX_\crs$ 
in both $u$ and $u'$ and
\begin{equation*}
A(u,u')=\sfrac{\hvol_c}{(2\pi)^{1+\sd}}
            {\displaystyle\sum_{k\in\hat\cX_\crs}}
                 e^{ik\cdot u}A_k(u,u')e^{-ik\cdot u'}
\end{equation*}

\item
$
A_\rk(u,u')={\displaystyle
           \sum_{\ell,\ell'\in\hat\cB}}
                 e^{i\ell\cdot u}\hat A(\rk+\ell,\rk+\ell')e^{-i\ell'\cdot u'}
$

\item 
Define the transpose of $A$ by $A^\star(u,u')=A(u',u)$. Then 
\begin{equation*}
A^\star(u,u')=\sfrac{\hvol_c}{(2\pi)^{1+\sd}}
       {\displaystyle\sum_{\atop{[\rk]\in\hat\cX_\crs}
                                {\ell,\ell'\in\hat\cB}}}
                 e^{i\ell\cdot u}\hat A(-\rk-\ell',-\rk-\ell)e^{-i\ell'\cdot u'}
                 e^{i\rk\cdot (u-u')}
\end{equation*}
\end{enumerate}
\end{lemma}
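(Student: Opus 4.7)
The plan is to establish the six parts in the order (a), (c), (b), (e), (d), (f), using as a structural backbone the observation that translation invariance forces
\[
\hat A(p,p')=0\qquad\text{whenever}\qquad p-p'\notin\hat\cB .
\]
This comes from the substitution $u\mapsto u+x$, $u'\mapsto u'+x$ with $x\in\cX_\crs$ in \eqref{eqnBOopft}: translation invariance absorbs $A(u+x,u'+x)$ while the exponentials supply an overall factor $e^{-i(p-p')\cdot x}$, so $\bigl(1-e^{-i(p-p')\cdot x}\bigr)\hat A(p,p')=0$ for every $x\in\cX_\crs$. The annihilator of $\cX_\crs$ inside $\hat\cX_\fin$ is precisely $\hat\cB$, hence the claim.

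Given this, (a) is Fourier inversion in both variables applied to \eqref{eqnBOopft}, using $\sfrac{1}{|\hat\cX_\fin|}\sum_{p\in\hat\cX_\fin}e^{ip\cdot(u-u')}=\de_{u,u'}$ together with the first line of \eqref{eqnBOvolhvol}. Part (c) is a direct computation: write $\widehat{(A\phi)}(p)=\vol_f\sum_u(A\phi)(u)e^{-ip\cdot u}$, insert the kernel of $A$, substitute the inverse Fourier expansion of $\phi(u')$, interchange sums, and recognize \eqref{eqnBOopft}. Part (b) then follows from (a) by splitting the sums over $\hat\cX_\fin$ via the bijection $\hat\cX_\crs\times\hat\cB\to\hat\cX_\fin$, $(\rk,\ell)\mapsto\rk+\ell$: the support constraint above forces $p$ and $p'$ to lie in the same coset, so one writes $p=\rk+\ell$, $p'=\rk+\ell'$ with a common representative $[\rk]$, and factoring out $e^{i\rk\cdot u}$ and $e^{-i\rk\cdot u'}$ produces the stated $e^{i\rk\cdot(u-u')}$ prefactor. (The identification $\hvol_f=\hvol_c$ is visible from the formulas.)

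For (e) I substitute (a) into \eqref{eqnBOkerk}. The inner sum $\sum_{u''\in u'+\cX_\crs}e^{i(\rk-p')\cdot u''}$ equals $|\cX_\crs|\,e^{i(\rk-p')\cdot u'}$ times the indicator that $\rk-p'\in\hat\cB$, by character orthogonality on $\cX_\crs$; the earlier support constraint on $\hat A$ then pins $p=\rk+\ell$ and $p'=\rk+\ell'$ with $\ell,\ell'\in\hat\cB$. The leftover prefactor is $\sfrac{\vol_c\hvol_f|\cX_\crs|}{(2\pi)^{1+\sd}}=1$ by the second line of \eqref{eqnBOvolhvol}. Part (d) then comes in two pieces: periodicity of $A_\rk(u,u')$ in $u$ and $u'$ is read off (e) because $\ell\cdot x\in 2\pi\bbbz$ for $\ell\in\hat\cB$ and $x\in\cX_\crs$; and substituting (e) into the right--hand side of the inverse formula of (d) reproduces (b) term by term, so the sum equals $A(u,u')$. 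Finally, (f) is obtained by applying (b) to $A^\star(u,u')=A(u',u)$ and relabeling $\rk\mapsto-\rk$, $\ell\leftrightarrow-\ell'$; since $\hat\cX_\crs$ and $\hat\cB$ are closed under negation the index ranges are unchanged, while the arguments of $\hat A$ become $-\rk-\ell'$ and $-\rk-\ell$ as required.

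The main obstacle I expect is purely bookkeeping: keeping the various volume prefactors ($\vol_f$, $\vol_c$, $\hvol_f$, $\hvol_c$ and $|\cX_\crs|$) aligned so that the identities in \eqref{eqnBOvolhvol} cancel at the right moments. Beyond that, nothing deeper than character orthogonality on a finite abelian group and the translation--invariance support constraint on $\hat A$ is required.
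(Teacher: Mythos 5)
Your argument is correct: the support constraint $\hat A(p,p')=0$ unless $p-p'\in\hat\cB$ (the annihilator of $\cX_\crs$ in $\hat\cX_\fin$), Fourier inversion for (a) and (c), the coset splitting $p=\rk+\ell$, $p'=\rk+\ell'$ for (b), character orthogonality on $\cX_\crs$ plus the second line of \eqref{eqnBOvolhvol} for (e), and the relabeling $\rk\mapsto-\rk$, $\ell\leftrightarrow-\ell'$ for (f) all check out, with (d) following from (e) and (b) exactly as you say. The paper offers no proof of its own (it declares the lemma standard), and your route is precisely the canonical Bloch/Floquet bookkeeping that the surrounding text presupposes.
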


%%%%%%%%%%%%%%%%%%%%%%%%%%%%%%%%%%%%
\subsection{Periodized Operators}\label{secBOperiodization}
%%%%%%%%%%%%%%%%%%%%%%%%%%%%%%%%%%%
Define the (infinite) lattices
\begin{align*}
\cZ_\fin= \veps_T\bbbz\times\veps_X\bbbz^\sd
\qquad
\cZ_\crs=L_T\veps_T\bbbz\times L_X\veps_X\bbbz^\sd
\cr
\end{align*}

\begin{definition}[Periodization]\label{defBOperiodization}
Suppose that $a(\ru,\ru')$ is a function on 
$\cZ_\fin\times\cZ_\fin$ that
\begin{itemize}[leftmargin=*, topsep=2pt, itemsep=0pt, parsep=0pt]
\item 
is translation invariant with respect to 
$\cZ_\crs$ in the sense that $a(\ru+\rx,\ru'+\rx)=a(\ru,\ru')$ 
for all $\rx\in\cZ_\crs$ and $\ru,\ru'\in \cZ_\fin$ and
\item 
has finite $L^1$--$L^\infty$ norm (i.e. 
$\sup\limits_{\ru\in\cZ_\fin}\ \sum\limits_{\ru'\in\cZ_\fin}
          \big|a(\ru,\ru')\big|$ and 
$\sup\limits_{\ru'\in\cZ_\fin}\ \sum\limits_{\ru\in\cZ_\fin}
          \big|a(\ru,\ru')\big|$ are both finite)
\end{itemize}
and that the operator $A$ (on $\cH_f$) acts by
\begin{equation}\label{eqnBOrestricttotorus}
(A\phi)([\ru])=\vol_f\sum_{\ru'\in \cZ_\fin} a(\ru,\ru')\phi([\ru'])
\end{equation}
Here, for each $\ru\in\cZ_\fin$, the notation $[\ru]$
means the equivalence class in $\cX_\fin$ that contains $\ru$. Then 
we say that $A$ is the periodization of $a$. It is ``$a$ with periodic
boundary conditions on a box of size $\veps_T\cL_T\times\overbrace{\veps_X\cL_X
\times\cdots\times\veps_X\cL_X}^{\sd{\rm\ factors}}\ $''.
\end{definition}

\pagebreak[2]
\begin{remark}\label{remBOperiodization}
\ 
\begin{enumerate}[label=(\alph*), leftmargin=*]
\item 
The right hand side of \eqref{eqnBOrestricttotorus} is independent 
of the representative $\ru$ chosen from $[\ru]$ (by translation invariance 
with respect to  
$\veps_T\cL_T\bbbz\times\veps_X\cL_X\bbbz^\sd\subset\cZ_\crs$).

\item The kernel of $A$ is given by 
\begin{equation*}
A\big([\ru],[\ru']\big)=\sum_{\atop{\ru''\in\cZ_\fin}{[\ru'']=[\ru']}}
                a(\ru,\ru'')
=\sum_{\rz\in\veps_T\cL_T\bbbz\times\veps_X\cL_X\bbbz^\sd}\hskip-10pt
                a(\ru,\ru'+\rz)
\end{equation*}
The sum converges because $a$ has finite $L^1$--$L^\infty$ norm.
This is the motivation for the name the ``periodization of $a$''.

\item 
If $A$ is the periodization of $a$ and $B$ is the periodization
of $b$, then $C=AB$ is the periodization of 
\begin{equation*}
c(\ru,\ru')=\vol_f\sum_{\ru''\in \cZ_\fin} a(\ru,\ru'')b(\ru'',\ru')
\end{equation*}

\end{enumerate}
\end{remark}

\noindent
Let $\hat\cZ_\crs=\big(\bbbr/\sfrac{2\pi}{\veps_TL_T}\bbbz\big)
\times \big(\bbbr^\sd/\sfrac{2\pi}{\veps_XL_X}\bbbz^\sd\big)$ 
be the dual space of $\cZ_\crs$. Its universal cover is $\bbbr\times\bbbr^\sd$.
For each $\rk\in\bbbr\times\bbbr^\sd$, set, for $\ru,\ru'\in\cZ_\fin$,
\begin{equation}\label{eqnBOifkerk}
a_\rk(\ru,\ru')=\vol_c\hskip-10pt
      \sum_{\atop{\ru''\in\cZ_\fin}{\ru''-\ru'\in\cZ_\crs}}
      \hskip-10pt
      e^{-i\rk\cdot\ru}a(\ru,\ru'')e^{i\rk\cdot \ru''}
\end{equation}
and, for $\ell,\ell'\in\hat\cB$,
\begin{equation}\label{eqnBOifkerkell}
\begin{split}
\hat a_\rk(\ell,\ell')
&=\sfrac{1}{|\cB|^2}
    \sum_{[\ru],[\ru']\in\cB}  \hskip-5pt
      e^{-i\ell\cdot\ru } a_\rk(\ru,\ru')e^{i\ell'\cdot\ru' } 
     \\
&=\sfrac{\vol_f}{|\cB|}
    \sum_{\atop{[\ru]\in\cB}{\ru'\in\cZ_\fin}}  \hskip-5pt
      e^{-i\ell\cdot\ru } a(\ru,\ru')e^{i\ell'\cdot\ru' }
      e^{-i\rk\cdot(\ru- \ru')} 
\end{split}
\end{equation}
(Recall that $\sfrac{1}{|\cB|^2}
=\sfrac{\vol_f}{\vol_c|\cB|}$. We shall show in Lemma
\ref{lemBOifkervar}.a, below, that $a_\rk(\ru,\ru')$ is periodic with respect 
to $\cZ_\crs$ in both $\ru$ and $\ru'$.) By the $L^1$--$L^\infty$ 
hypothesis and the Lebesgue dominated convergence theorem, both 
$a_\rk(\ru,\ru')$ and $a_\rk(\ell,\ell')$ are continuous in $\rk$.

\begin{remark}\label{remBOatwisted}
As was the case for $A_k(u,u')$, for each fized $\ru,\ru'\in\cZ_\fin$, 
the map $\rk\mapsto a_\rk(\ru,\ru')$ is not a function on the torus
$\hat\cZ_\crs$ since, for $\rp\in \sfrac{2\pi}{\veps_TL_T}\bbbz
\times \sfrac{2\pi}{\veps_XL_X}\bbbz^\sd$,
$$
a_{\rk+\rp}(\ru,\ru')=e^{-i\rp\cdot(\ru-\ru')}a_\rk(\ru,\ru')
$$
However
$$
\rk\in\hat\cZ_\crs\mapsto
e^{i\rk\cdot(\ru-\ru')}a_\rk(\ru,\ru')
=\vol_c\hskip-5pt
      \sum_{\rx\in\cZ_\crs}
      \hskip-5pt
      a(\ru,\ru'+\rx)e^{i\rk\cdot\rx}
$$
is a legitimate function on the torus $\hat\cZ_\crs$ and is in fact 
the Fourier transform of the function
$$
\rx\in\cZ_\crs\mapsto a(\ru,\ru'+\rx)
$$
Correspondingly, for $\rp\in \sfrac{2\pi}{\veps_TL_T}\bbbz
\times \sfrac{2\pi}{\veps_XL_X}\bbbz^\sd$ and 
$\ell,\ell'\in\hat\cB$
$$
\hat a_{\rk+\rp}(\ell,\ell')=\hat a_{\rk}(\ell+\rp,\ell'+\rp)
$$
\end{remark}

The following two lemmas are again standard.

\begin{lemma}\label{lemBOifkervar}
Let $a(\ru,\ru'):\cZ_\fin\times\cZ_\fin\rightarrow\bbbc$ 
obey the conditions of Definition \ref{defBOperiodization}, 
and, in particular, be translation invariant with respect to $\cZ_\crs$.
\begin{enumerate}[label=(\alph*), leftmargin=*]
\item
For each $\rk\in\bbbr\times\bbbr^\sd$, $a_\rk(\ru,\ru')$ is 
periodic with respect to $\cZ_\crs$ 
in both $\ru$ and $\ru'$ and
\begin{align*}
a(\ru,\ru')&=\int_{\hat\cZ_\crs}\hskip-10pt
                a_\rk(\ru,\ru') e^{i\rk\cdot (\ru-\ru') }
                \sfrac{d^{1+\sd}\rk}{(2\pi)^{1+\sd}} \\
&=\sum_{\ell,\ell'\in\hat\cB}
\int_{\hat\cZ_\crs}\hskip-10pt
                e^{i\ell\cdot \ru }\hat a_\rk(\ell,\ell')e^{-i\ell'\cdot \ru' }
                e^{i\rk\cdot (\ru-\ru') }  
                \sfrac{d^{1+\sd}\rk}{(2\pi)^{1+\sd}}
\end{align*}

\item 
If, in addition, $a(\ru,\ru')=\al(\ru-\ru')$ is translation invariant
with respect to $\cZ_\fin$, then
\begin{equation*}
\hat a_\rk(\ell,\ell')
=\de_{\ell',\ell}\ \hat\al(\rk+\ell)
\end{equation*}
where $\hat\al(\rp)
    =\vol_f\hskip-5pt\sum\limits_{\ru\in\cZ_\fin}  \hskip-5pt
      \al(\ru)e^{-i\rp\cdot\ru}$.

\item 
Let $A$ be the periodization of $a$. Then
\begin{align*}
A_\rk([\ru],[\ru'])&=a_\rk(\ru,\ru')
\\
\hat A(\rk+\ell,\rk+\ell')
   &=\hat a_\rk(\ell,\ell')
\end{align*}
for all $\rk\in\sfrac{2\pi}{\veps_T\cL_T}\bbbz
\times\sfrac{2\pi}{\veps_X\cL_X}\bbbz^\sd$,
$[\ru],[\ru']\in\cX_\fin$ and $\ell,\ell'\in\hat\cB$.
\end{enumerate}
\end{lemma}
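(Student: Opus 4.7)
The plan is to treat each of the three parts in turn using standard Fourier manipulations together with the translation invariance of $a$, in direct analogy with Lemma~\ref{lemBOkervar}.

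For Part (a), the periodicity of $a_\rk(\ru,\ru')$ in $\ru$ follows from \eqref{eqnBOifkerk} by the change of variables $\ru''\mapsto\ru''+\rx$ with $\rx\in\cZ_\crs$: the constraint $\ru''-\ru'\in\cZ_\crs$ is preserved, $a(\ru+\rx,\ru''+\rx)=a(\ru,\ru'')$ by hypothesis, and the two phases $e^{\mp i\rk\cdot\rx}$ cancel. Periodicity in $\ru'$ is immediate because the summation domain depends on $\ru'$ only through its $\cZ_\crs$-coset. For the first inversion formula I would invoke the rewriting in Remark~\ref{remBOatwisted}, which exhibits $e^{i\rk\cdot(\ru-\ru')}a_\rk(\ru,\ru')$ as the Fourier series on $\hat\cZ_\crs$ of $\rx\mapsto\vol_c\,a(\ru,\ru'+\rx)$; Fourier inversion (absolute convergence is guaranteed by the $L^1$--$L^\infty$ hypothesis) picks out the term $\rx=0$ and gives $a(\ru,\ru')$. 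The second displayed formula then follows by inverting the discrete $\cB$-Fourier transform \eqref{eqnBOifkerkell}, i.e.\ writing $a_\rk(\ru,\ru')=\sum_{\ell,\ell'\in\hat\cB}e^{i\ell\cdot\ru}\hat a_\rk(\ell,\ell')e^{-i\ell'\cdot\ru'}$, and substituting.

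For Part (b), I would substitute $a(\ru,\ru')=\al(\ru-\ru')$ into the second form of \eqref{eqnBOifkerkell} and, for each $[\ru]\in\cB$, replace the summation variable $\ru'$ by $\ru''=\ru-\ru'\in\cZ_\fin$. The phase factors regroup as $e^{-i(\ell-\ell')\cdot\ru}\cdot e^{-i(\rk+\ell')\cdot\ru''}$, so the expression factorizes: the sum over $[\ru]\in\cB$ contributes $|\cB|\de_{\ell,\ell'}$, the sum over $\ru''\in\cZ_\fin$ contributes $\hat\al(\rk+\ell')/\vol_f$, and combining these with the prefactor $\vol_f/|\cB|$ yields $\de_{\ell,\ell'}\,\hat\al(\rk+\ell)$.

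For Part (c), I would expand the kernel of the periodization via Remark~\ref{remBOperiodization}(b), namely $A([\ru],[\ru''])=\sum_\rz a(\ru,\ru''+\rz)$ with $\rz\in\veps_T\cL_T\bbbz\times\veps_X\cL_X\bbbz^\sd$, and substitute into \eqref{eqnBOkerk}. Because $\rk\in\sfrac{2\pi}{\veps_T\cL_T}\bbbz\times\sfrac{2\pi}{\veps_X\cL_X}\bbbz^\sd$, the phase $e^{i\rk\cdot\rz}$ equals $1$, so the double sum over $[\ru'']\in\cX_\fin$ (subject to $[\ru'']-[\ru']\in\cX_\crs$) and $\rz$ unfolds to a single sum over $\tilde\ru=\ru''+\rz\in\cZ_\fin$ with $\tilde\ru-\ru'\in\cZ_\crs$, which matches \eqref{eqnBOifkerk} exactly. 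The identity $\hat A(\rk+\ell,\rk+\ell')=\hat a_\rk(\ell,\ell')$ then follows by comparing the $\cB$-Fourier expansion of $a_\rk$ used in Part~(a) with that of $A_\rk$ in Lemma~\ref{lemBOkervar}(e) and invoking uniqueness of the discrete Fourier coefficients. The main hazard throughout is bookkeeping: keeping straight the distinction between the infinite lattice $\cZ_\fin$ and the finite torus $\cX_\fin$, between the continuous torus $\hat\cZ_\crs$ and the discrete $\hat\cX_\crs$, and between the various normalizing volumes; once the dictionary ``periodization $=$ restrict to torus and sum over translates'' is in place, every step is a change of variables or an orthogonality relation.
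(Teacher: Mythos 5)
Your proof is correct and follows exactly the standard Fourier/unfolding manipulations the paper has in mind: the paper itself gives no proof of Lemma \ref{lemBOifkervar} (it is declared ``standard''), and your arguments — change of variables plus translation invariance for periodicity, the Fourier-series identity of Remark \ref{remBOatwisted} plus discrete inversion on $\cB\times\cB$ for (a), the factorization into $|\cB|\de_{\ell,\ell'}$ times $\hat\al(\rk+\ell')/\vol_f$ for (b), and unfolding the periodization sum against $e^{i\rk\cdot\rz}=1$ together with uniqueness of the $\hat\cB$-coefficients for (c) — are precisely the intended computations, with the normalizing volumes handled correctly.
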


\begin{lemma}\label{lemBOperiodalg}
\ 
\begin{enumerate}[label=(\alph*), leftmargin=*]
\item 
If $a(\ru,\ru')=\sfrac{1}{\vol_f}\de_{\ru,\ru'}$ is the kernel
of the identity operator,
% on $L^2(\cZ_\fin)$, 
then
$
\hat a_\rk(\ell,\ell') =\de_{\ell,\ell'}
$.
\item 
Let $a(\ru,\ru'),b(\ru,\ru')
:\cZ_\fin\times\cZ_\fin\rightarrow\bbbc$ 
both obey the conditions of Definition \ref{defBOperiodization}, and set
\begin{equation*}
c(\ru,\ru')=\vol_f\sum_{\ru''\in\cZ_\fin}a(\ru,\ru'')b(\ru'',\ru')
\end{equation*}
Then, for all $\rk\in\sfrac{2\pi}{\veps_T\cL_T}\bbbz
  \times\sfrac{2\pi}{\veps_X\cL_X}\bbbz^\sd$ and 
$\ell,\ell'\in\hat\cB$,
\begin{equation*}
c_\rk(\ell,\ell')
=\sum_{\ell''\in\hat\cB}a_\rk(\ell,\ell'')b_\rk(\ell'',\ell')
\end{equation*}
\end{enumerate}
\end{lemma}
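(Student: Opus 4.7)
Both statements are Fourier bookkeeping, with the convention that $c_\rk(\ell,\ell')$ in the displayed equation of (b) means $\hat c_\rk(\ell,\ell')$ (since $\ell,\ell'\in\hat\cB$), and similarly for $a_\rk$, $b_\rk$ on the right. For (a), I plug $a(\ru,\ru')=\sfrac{1}{\vol_f}\de_{\ru,\ru'}$ into \eqref{eqnBOifkerk}: the Kronecker delta forces $\ru''=\ru$, so the constraint $\ru''-\ru'\in\cZ_\crs$ becomes $\ru-\ru'\in\cZ_\crs$ and the remaining exponentials cancel, giving
\begin{equation*}
a_\rk(\ru,\ru')=\sfrac{\vol_c}{\vol_f}\,\de_{[\ru],[\ru']}=|\cB|\,\de_{[\ru],[\ru']},
\end{equation*}
where $[\cdot]$ denotes the class in $\cB=\cZ_\fin/\cZ_\crs$. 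Plugging this into the first line of \eqref{eqnBOifkerkell} and invoking dual orthogonality on the finite abelian group $\cB$, namely $\sfrac{1}{|\cB|}\sum_{[\ru]\in\cB}e^{i(\ell'-\ell)\cdot\ru}=\de_{\ell,\ell'}$, yields $\hat a_\rk(\ell,\ell')=\de_{\ell,\ell'}$.

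For (b), my strategy is to expand the right-hand side and recognize it as $\hat c_\rk(\ell,\ell')$. Inserting the second formula in \eqref{eqnBOifkerkell} into each of $\hat a_\rk(\ell,\ell'')$ and $\hat b_\rk(\ell'',\ell')$ and performing the $\ell''$-sum first, dual orthogonality $\sum_{\ell''\in\hat\cB}e^{i\ell''\cdot(\tilde\ru_1-\tilde\ru_2)}=|\cB|\,\de_{[\tilde\ru_1],[\tilde\ru_2]}$ collapses the pair of intermediate sums onto $[\tilde\ru_2]=[\tilde\ru_1]$, forcing the middle variable $\tilde\ru_2$ (which ranges over $\cB$) to be the fundamental-domain representative of $\tilde\ru_1\in\cZ_\fin$.

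The next move is to rewrite $b([\tilde\ru_1],\tilde\ru_3)$ as $b(\tilde\ru_1,\,\cdot\,)$ so that the $\tilde\ru_1$-sum becomes the convolution defining $c$. Setting $\rx:=\tilde\ru_1-[\tilde\ru_1]\in\cZ_\crs$, translation invariance gives $b([\tilde\ru_1],\tilde\ru_3)=b(\tilde\ru_1,\tilde\ru_3+\rx)$; the substitution $\tilde\ru_3'=\tilde\ru_3+\rx$ in the $\cZ_\fin$-sum contributes an extra phase $e^{-i\ell'\cdot\rx}=1$ (from $\ell'\in\hat\cB$, $\rx\in\cZ_\crs$), while $e^{-i\rk\cdot([\tilde\ru_1]-\tilde\ru_3)}$ becomes $e^{-i\rk\cdot([\tilde\ru_1]+\rx-\tilde\ru_3')}=e^{-i\rk\cdot(\tilde\ru_1-\tilde\ru_3')}$, precisely the $\rk$-phase that will be needed. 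What remains is
\begin{equation*}
\sfrac{\vol_f}{|\cB|}\sum_{[\tilde\ru]\in\cB}\sum_{\tilde\ru_3'\in\cZ_\fin}
e^{-i\ell\cdot\tilde\ru}\Bigl[\vol_f\hskip-5pt\sum_{\tilde\ru_1\in\cZ_\fin}a(\tilde\ru,\tilde\ru_1)b(\tilde\ru_1,\tilde\ru_3')\Bigr]e^{i\ell'\cdot\tilde\ru_3'}e^{-i\rk\cdot(\tilde\ru-\tilde\ru_3')},
\end{equation*}
whose bracket is $c(\tilde\ru,\tilde\ru_3')$ and whose total value is $\hat c_\rk(\ell,\ell')$ by the second line of \eqref{eqnBOifkerkell}.

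The only real obstacle is the phase bookkeeping in the previous paragraph: one must track carefully which variables range over $\cZ_\fin$ and which over a fundamental domain for $\cB$, and check that the representative-shift $[\tilde\ru_1]\to\tilde\ru_1$ converts the $\rk$-twist of $\hat b_\rk$ into exactly the $\rk$-twist needed in $\hat c_\rk$. This cancellation is the arithmetic heart of the statement that Bloch/Floquet decomposition intertwines convolution of kernels on $\cZ_\fin$ with matrix multiplication of Bloch fibres on $\hat\cB$, so no further surprises are expected.
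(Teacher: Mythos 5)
Your proof is correct. The paper gives no argument for this lemma (it is labelled ``standard''), and your direct verification is precisely the intended one: in (a), the delta kernel in \eqref{eqnBOifkerk} gives $a_\rk(\ru,\ru')=|\cB|\,\de_{[\ru],[\ru']}$ and orthogonality on $\cB$ in \eqref{eqnBOifkerkell} finishes; in (b), summing out $\ell''$ via dual orthogonality and then shifting the second argument of $b$ by $\rx=\tilde\ru_1-[\tilde\ru_1]\in\cZ_\crs$ (using translation invariance of $b$, $\ell'\cdot\rx\in2\pi\bbbz$, and the recombination of the $\rk$--phases) reproduces the second line of \eqref{eqnBOifkerkell} applied to $c$. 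The only cosmetic omissions are the immediate observations that $c$ again satisfies the hypotheses of Definition \ref{defBOperiodization} and that all rearrangements of sums are justified by the finite $L^1$--$L^\infty$ norms.
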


We now generalize the above discussion to include periodized operators 
from $L^2(\cX_\crs)$ to $L^2(\cX_\fin)$ and vice versa.
If $b(\ru,\rx)$ and $c(\rx,\ru)$ are translation invariant with respect to
$\cZ_\crs$ (with $\rx$ running over $\cZ_\crs$
and with $\ru$ running over $\cZ_\fin$ as usual)
and have finite $L^1$--$L^\infty$ norm, we define,
for $\rk\in\bbbr\times\bbbr^\sd$ and $\ell,\ell'\in\hat\cB$,
\begin{equation}\label{eqnBOasymmetric}
\begin{split}
\hat b_\rk(\ell)= \vol_f
    \sum_{[\ru]\in\cB\atop\rx\in\cZ_\crs}  \hskip-5pt
      e^{-i(\rk+\ell)\cdot\ru } b(\ru,\rx)
      e^{i\rk\cdot\rx} 
\quad
\hat c_\rk(\ell')= \vol_f
    \sum_{[\ru]\in\cB\atop\rx\in\cZ_\crs}  \hskip-5pt
      e^{-i\rk\cdot\rx} c(\rx,\ru)e^{i(\rk+\ell')\cdot\ru } 
\end{split}
\end{equation}   
For $\rp\in \sfrac{2\pi}{\veps_TL_T}\bbbz
\times \sfrac{2\pi}{\veps_XL_X}\bbbz^\sd$ and 
$\ell,\ell'\in\hat\cB$
\begin{equation*}
\hat b_{\rk+\rp}(\ell)=\hat b_{\rk}(\ell+\rp)\qquad
\hat c_{\rk+\rp}(\ell')=\hat c_{\rk}(\ell'+\rp)
\end{equation*}
The inverse transforms are
\begin{align*}
b(\ru,\rx)&=\sum_{\ell\in\hat\cB}
\int_{\hat\cZ_\crs}\hskip-10pt
                e^{i\ell\cdot \ru }\hat b_\rk(\ell)
                e^{i\rk\cdot (\ru-\rx) }  
                \sfrac{d^{1+\sd}\rk}{(2\pi)^{1+\sd}} \\
c(\rx,\ru)&=\sum_{\ell'\in\hat\cB}
\int_{\hat\cZ_\crs}\hskip-10pt
                \hat a_\rk(\ell,\ell')e^{-i\ell'\cdot \ru }
                e^{i\rk\cdot (\rx-\ru) }  
                \sfrac{d^{1+\sd}\rk}{(2\pi)^{1+\sd}} 
\end{align*}
For $\psi\in L^2(\cX_\crs)$ and $\phi\in L^2(\cX_\fin)$
\begin{equation}\label{eqnPOftaction}
\begin{split}
\widehat{b\psi}(k+\ell)&= \hat b_k(\ell)\hat\psi(k)\\
\widehat{c\phi}(k)&= \sum_{\ell'\in\hat\cB}\hat c_k(\ell')\hat\phi(k+\ell')
\end{split}
\end{equation}
%%%%%%%%%%%%%%%%%%%%%%%
If $b^\star(\rx,\ru)=b(\ru,\rx)$ and $c^\star(\ru,\rx)=c(\rx,\ru)$ are the transposes
of $b$ and $c$, respectively, then
\begin{equation}\label{eqnPOtranspose}
\hat b^\star_\rk(\ell')=\hat b_{-\rk}(-\ell')
\qquad
\hat c^\star_\rk(\ell)=\hat c_{-\rk}(-\ell)
\end{equation}

%%%%%%%%%%%%%%%%%%%%%%%%
\subsection{Averaging Operators}
%%%%%%%%%%%%%%%%%%%%%%%%

In this subsection, we analyze ``averaging operators'' as examples
of periodic operators. 
Fix a function $q:\cX_\fin\rightarrow\bbbr$ and define the ``averaging
operator'' $Q:\cH_f\rightarrow\cH_c$ by
\begin{equation}\label{eqnBOaveop}
(Q\phi)(x)=\vol_f\sum_{u\in\cX_\fin}q(x-u)\phi(u)
\end{equation}
\begin{lemma}\label{lemBOQ}
\ 
\begin{enumerate}[label=(\alph*), leftmargin=*]
\item
The adjoint $Q^\star$ is given by
\begin{equation*}
\big(Q^\star\psi)(u)=
      \vol_c\sum_{x\in\cX_\crs} \psi(x)q(x-u)
\end{equation*}
\item
The composite operators $QQ^\star$ and $Q^\star Q$ are given by
\begin{align*}
(QQ^\star\psi)(x)&=\vol_f\vol_c
                     \sum_{u\in\cX_\fin\atop
                           x'\in\cX_\crs} q(x-u)q(x'-u)\psi(x')
\\
(Q^\star Q\phi)(u)&=\vol_f\vol_c
                     \sum_{u'\in\cX_\fin\atop
                           x\in\cX_\crs} q(x-u)q(x-u')\phi(u')
\end{align*} 
\end{enumerate}
\end{lemma}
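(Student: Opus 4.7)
The proof is a direct computation in two steps.

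For part (a), my plan is to start from the defining property of the adjoint, namely $\<(Q\phi)^*,\psi\>_c=\<\phi^*,Q^\star\psi\>_f$ for all $\phi\in\cH_f$ and $\psi\in\cH_c$, and identify $Q^\star$. Expanding the left hand side with the definitions,
\begin{equation*}
\<(Q\phi)^*,\psi\>_c=\vol_c\sum_{x\in\cX_\crs}(Q\phi)(x)^*\psi(x)
=\vol_c\vol_f\sum_{x\in\cX_\crs}\sum_{u\in\cX_\fin}q(x-u)\phi(u)^*\psi(x),
\end{equation*}
using that $q$ is real valued. Regrouping as a sum over $u$ first then $x$ identifies the bracketed quantity $\vol_c\sum_{x\in\cX_\crs}\psi(x)q(x-u)$ as $(Q^\star\psi)(u)$, which is the claim. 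There is no real obstacle here; one just needs to be careful about the placement of complex conjugates and about the fact that the weights $\vol_c$ and $\vol_f$ attach themselves to the sums in their ``natural'' spaces (over $\cX_\crs$ and $\cX_\fin$, respectively).

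For part (b), my plan is to simply compose the formulas. Substituting the expression for $Q^\star$ from part (a) into the definition \eqref{eqnBOaveop} of $Q$ gives
\begin{equation*}
(QQ^\star\psi)(x)=\vol_f\sum_{u\in\cX_\fin}q(x-u)(Q^\star\psi)(u)
=\vol_f\vol_c\sum_{u\in\cX_\fin}\sum_{x'\in\cX_\crs}q(x-u)q(x'-u)\psi(x'),
\end{equation*}
and similarly
\begin{equation*}
(Q^\star Q\phi)(u)=\vol_c\sum_{x\in\cX_\crs}(Q\phi)(x)q(x-u)
=\vol_f\vol_c\sum_{x\in\cX_\crs}\sum_{u'\in\cX_\fin}q(x-u)q(x-u')\phi(u').
\end{equation*}
These are the two formulas in the statement. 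Since all sums are finite (both lattices $\cX_\fin$ and $\cX_\crs$ are finite), there are no convergence issues, and one need not worry about interchange of summation order. The entire proof is mechanical; I do not anticipate any real obstacle beyond keeping track of the two volume factors.
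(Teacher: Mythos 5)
Your computation is correct and is exactly the routine verification the paper has in mind when it dismisses the proof as ``trivial'': part (a) follows from the defining property of the adjoint together with the reality of $q$, and part (b) is just composition of the two kernel formulas over finite lattices. Nothing further is needed.
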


\begin{proof} trivial. 
\end{proof}

\begin{example}\label{exBOnaive}
Assume that $L_T$  and $L_X$ are odd and choose $q$ to be 
$\sfrac{1}{\vol_c}$ times the characteristic
function of the rectangle $\veps_T\big[-\sfrac{L_T-1}{2}, \sfrac{L_T-1}{2} \big]
\times \veps_X\big[-\sfrac{L_X-1}{2}, \sfrac{L_X-1}{2} \big]^\sd$ in 
$\cX_\fin$. Observe that the number of points in this rectangle 
is exactly $L_TL_X^\sd$. For $x\in\cX_\crs$, denote by $\sq_x$ 
the rectangle $x+\veps_T\big[-\sfrac{L_T-1}{2}, \sfrac{L_T-1}{2} \big]
\times \veps_X\big[-\sfrac{L_X-1}{2}, \sfrac{L_X-1}{2} \big]^\sd$ in 
$\cX_\fin$. Also, for $u\in\cX_\fin$, let $\xi(u)$ be the point
of $\cX_\crs$ closest to $u$. Then
\begin{equation*}
(Q\phi)(x)=\sfrac{1}{L_TL_X^\sd}\sum_{u\in\sq_x}\phi(u)\qquad
\big(Q^\star\psi)(u)=\psi\big(\xi(u)\big)
\end{equation*}
The composite operators are
\begin{align*}
(QQ^\star\psi)(x)&=\sfrac{1}{L_TL_X^\sd}\sum_{u\in\sq_x}(Q^\star\psi)(u)
=\sfrac{1}{L_TL_X^\sd}\sum_{u\in\sq_x}\psi(x)
=\psi(x)
\\
(Q^\star Q\phi)(u)&=(Q\phi)\big(\xi(u)\big)
=\sfrac{1}{L_TL_X^\sd}\sum_{u'\in\sq_{\xi(u)}}\phi(u')
\end{align*} 
\end{example}

\begin{lemma}\label{lemBOfourier}
Let $Q:\cH_f\rightarrow\cH_c$ be the averaging operator of \eqref{eqnBOaveop},
but with $q:\cZ_\fin\rightarrow\bbbr$ and $q(\ru)$ vanishing unless
$|\ru_0|<\half\veps_T\cL_T$ and $|\ru_\nu|<\half\veps_X\cL_X$ for $\nu=1,2,3$.
\begin{enumerate}[label=(\alph*), leftmargin=*]
\item
For all $\phi\in\cH_f$ and $\psi\in\cH_c$,
\begin{align*}
\widehat{(Q\phi)}(k)
  &=\sum_{ \atop{p\in\hat\cX_\fin}{\hat\pi(p)=k}}
                       \hat q(p)\hat\phi(p)
&
\widehat{(Q^\star\psi)}(p)&=\overline{\hat q(p)}\,\hat\psi\big(\hat\pi(p)\big)
\\
\widehat{(QQ^\star\psi)}(k)
&=\Big(\smsum_{\atop{p\in\hat\cX_\fin}{\hat\pi(p)=k} }
                       \big|\hat q(p)\big|^2\Big)\hat\psi(k) &
\widehat{(Q^\star Q\phi)}(p)
&=\overline{\hat q(p)}\!\!\sum_{ \atop{p'\in\hat\cX_\fin}{\hat\pi(p')=\hat\pi(p)}}\!\!
                        \hat q(p')\,\hat\phi(p')
\end{align*}

\item
For $A=Q^\star Q$, 
\begin{align*}
\hat a_\rk(\ell,\ell')
&=\overline{q(\rk+\ell)}\,q(\rk+\ell')
\end{align*}
\end{enumerate}
\end{lemma}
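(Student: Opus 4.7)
The plan is to derive part (a) by direct Fourier computation on both lattices and then reduce part (b) to part (a) via the structural identities of Lemmas \ref{lemBOkervar} and \ref{lemBOifkervar}.

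For the first formula of part (a) I would insert the Fourier inversion of $\phi$ into the defining sum $(Q\phi)(x) = \vol_f \sum_{u\in\cX_\fin} q(x-u)\phi(u)$, producing an inner expression $\vol_f \sum_u q(x-u) e^{ip\cdot u}$. Because $q$ is supported strictly inside one fundamental domain of $\cX_\fin\subset\cZ_\fin$, the change of variable $v = x - u$ encounters no wraparound and yields $e^{ip\cdot x}\hat q(p)$, so
\[
(Q\phi)(x) = \sfrac{\hvol_f}{(2\pi)^{1+\sd}} \sum_{p\in\hat\cX_\fin} \hat q(p)\,\hat\phi(p)\, e^{ip\cdot x}.
\]
Applying the coarse Fourier transform and collapsing $\vol_c\sum_{x\in\cX_\crs} e^{i(p-k)\cdot x}$ via orthogonality on $\cX_\crs$, combined with \eqref{eqnBOvolhvol}, delivers $\widehat{Q\phi}(k) = \sum_{p:\hat\pi(p)=k} \hat q(p)\hat\phi(p)$. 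The formula for $\widehat{Q^\star\psi}(p)$ comes from a parallel computation in which the change of variable $v = x - u$ now runs over $x\in\cX_\crs$; using that $q$ is real turns the inner exponential sum into $\overline{\hat q(p)}$, and the remaining $\cX_\crs$ sum reproduces $\hat\psi(\hat\pi(p))$. The two composite formulas $\widehat{QQ^\star\psi}(k)$ and $\widehat{Q^\star Q\phi}(p)$ then follow by immediate substitution.

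For part (b), I would read off the matrix elements of $A = Q^\star Q$ using Lemma \ref{lemBOkervar}(c), which identifies $\hat A(p,p')$ as the coefficient of $\hat\phi(p')$ in $\widehat{A\phi}(p)$. From the last formula of part (a),
\[
\hat A(p,p') = \overline{\hat q(p)}\,\hat q(p') \quad\text{when } \hat\pi(p)=\hat\pi(p'), \quad\text{and } 0 \text{ otherwise.}
\]
Substituting $p = \rk + \ell$ and $p' = \rk + \ell'$ with $\ell,\ell' \in \hat\cB$, the condition $\hat\pi(p) = \hat\pi(p')$ is automatic because $\hat\cB = \ker\hat\pi$, so $\hat A(\rk+\ell,\rk+\ell') = \overline{\hat q(\rk+\ell)}\,\hat q(\rk+\ell')$. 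To convert this into a statement about $\hat a_\rk$, I display $A$ as the periodization, in the sense of Definition \ref{defBOperiodization}, of
\[
a(\ru,\ru') = \vol_c\sum_{\rx\in\cZ_\crs} q(\rx-\ru)\,q(\rx-\ru').
\]
This $a$ is manifestly $\cZ_\crs$-translation invariant and has finite $L^1$--$L^\infty$ norm because $q$ has compact support, so Lemma \ref{lemBOifkervar}(c) applies and gives $\hat a_\rk(\ell,\ell') = \hat A(\rk+\ell,\rk+\ell')$, which is the desired identity.

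The main obstacle I anticipate is purely bookkeeping: keeping track of which lattice each exponential sum is indexed by, and applying the orthogonality identities with the correct normalization dictated by \eqref{eqnBOvolhvol}. No genuine analytic difficulty arises, since every sum in sight is finite and the crucial compact-support hypothesis on $q$ is exactly what eliminates the wraparound that would otherwise distinguish $\cX_\fin$ from $\cZ_\fin$.
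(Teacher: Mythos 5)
Your part (a) is correct and follows essentially the same route as the paper: insert the Fourier inversion of $\phi$ (resp.\ $\psi$), use the support hypothesis on $q$ so that the torus transform of $q$ agrees with its $\cZ_\fin$ transform, and collapse the sum over $\cX_\crs$ by orthogonality with the normalizations \eqref{eqnBOvolhvol}; the composite formulas are then substitutions. No complaint there.

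Part (b), however, has a genuine gap. The quantity $\hat a_\rk(\ell,\ell')$ is defined by \eqref{eqnBOifkerkell} for \emph{every} $\rk\in\bbbr\times\bbbr^\sd$, and that continuum (indeed, after continuation, complex) dependence on $\rk$ is precisely what the lemma is for — it is what gets fed into Lemma \ref{lemBOlonelinfty}. Your derivation passes through $\hat A(p,p')$, which is defined only for $p,p'$ in the finite set $\hat\cX_\fin$, and through Lemma \ref{lemBOifkervar}.c, which yields $\hat a_\rk(\ell,\ell')=\hat A(\rk+\ell,\rk+\ell')$ only for $\rk\in\sfrac{2\pi}{\veps_T\cL_T}\bbbz\times\sfrac{2\pi}{\veps_X\cL_X}\bbbz^\sd$. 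So you obtain the claimed formula only at that discrete set of $\rk$'s, and there is no continuity or analyticity argument that upgrades this to all $\rk$: two analytic functions may agree on a lattice without being equal (e.g.\ $\sin(\veps_T\cL_T\rk_0/2)$ vanishes on $\sfrac{2\pi}{\veps_T\cL_T}\bbbz$), and more structurally, the values of $\hat a_\rk$ at those discrete $\rk$ determine only the periodization $A$ of $a$, not $a$ itself — distinct $\cZ_\crs$--invariant kernels with the same periodization have identical $\hat A$ but different $\hat a_\rk$ for generic $\rk$. The repair is to bypass the finite torus entirely, which is exactly what the paper does: you already wrote $a(\ru,\ru')=\vol_c\sum_{\rx\in\cZ_\crs}q(\rx-\ru)\,q(\rx-\ru')$; substitute this into \eqref{eqnBOifkerkell}, use $e^{i\ell\cdot\rx}=e^{i\ell'\cdot\rx}=1$ for $\rx\in\cZ_\crs$, $\ell,\ell'\in\hat\cB$ to write the phases as $e^{-i(\rk+\ell)\cdot(\ru-\rx)}$ and $e^{i(\rk+\ell')\cdot(\ru'-\rx)}$, shift $\ru'$ by $\rx$, and note that $([\ru],\rx)\mapsto\rx-\ru$ is a bijection from $\cB\times\cZ_\crs$ onto $\cZ_\fin$; with $\sfrac{\vol_f\vol_c}{|\cB|}=\vol_f^2$ the double sum factors into $\overline{\hat q(\rk+\ell)}\,\hat q(\rk+\ell')$, with $\hat q$ now the $\cZ_\fin$ Fourier transform, valid for every real $\rk$ (and extending to complex $\rk$, the sums being finite by the support hypothesis).
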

\begin{proof}
(a) Using the definitions and \eqref{eqnBOvolhvol},
\begin{alignat*}{3}
\widehat{(Q\phi)}(k)
&=\vol_c\hskip-5pt\sum_{x\in\cX_\crs}\hskip-5pt 
                      (Q\phi)(x) e^{-i k\cdot x }
%\cr&
&=\vol_f\vol_c
    \hskip-4pt\sum_{\atop{x\in\cX_\crs}{ u\in\cX_\fin}}\hskip-4pt 
                       e^{-i k\cdot x }q(x-u)\phi(u)\displaybreak[0]\\
%%%
&\hskip-20pt=\sfrac{\vol_f}{|\cX_\crs|}
 \hskip-5pt\sum_{\atop{\atop{x\in\cX_\crs}{u\in\cX_\fin}}
                       {p\in\hat\cX_\fin}} \hskip-4pt
                       e^{-i k\cdot x }
                       e^{i u\cdot p }q(x-u)\hat\phi(p)\ 
&=\sfrac{\vol_f}{|\cX_\crs|}
  \hskip-5pt\sum_{\atop{\atop{x\in\cX_\crs}{ u\in\cX_\fin}}
                       {p\in\hat\cX_\fin}} \hskip-4pt
                       e^{-i k\cdot x }
                       e^{i (x-u)\cdot p }q(u)\hat\phi(p)\displaybreak[0]\\
%%%
&\hskip-20pt=\sfrac{1}{|\cX_\crs|}
             \sum_{\atop{x\in\cX_\crs}{p\in\hat\cX_\fin}}
                       e^{-i (k-p)\cdot x }
                       \hat q(p)\hat\phi(p)
&=\sfrac{1}{|\cX_\crs|}
             \sum_{\atop{x\in\cX_\crs}{p\in\hat\cX_\fin}}
                       e^{-i (k-\hat\pi(p))\cdot x }
                       \hat q(p)\hat\phi(p)\\
&\hskip-20pt= \sum_{ \atop{p\in\hat\cX_\fin}{\hat\pi(p)=k} }
                       \hat q(p)\hat\phi(p)
\end{alignat*}
The computation for $\widehat{(Q^\star\psi)}(p)$ is similar.
For the composite operators
\begin{align*}
\widehat{(QQ^\star\psi)}(k)
&=\sum_{ \atop{p\in\hat\cX_\fin}{ \hat\pi(p)=k} }
                       \hat q(p)\widehat{(Q^\star\psi)}(p)
=\sum_{ \atop{p\in\hat\cX_\fin}{ \hat\pi(p)=k} }
                       \hat q(p)\overline{\hat q(p)}\,\hat\psi\big(\hat\pi(p)\big)
=\sum_{ \atop{p\in\hat\cX_\fin}{ \hat\pi(p)=k} }
                       \big|\hat q(p)\big|^2\,\hat\psi(k)
\end{align*}
and similarly for $\widehat{(Q^\star Q\phi)}(p)$.

\Item (b) Since 
$$
a(\ru,\ru')=\vol_c\sum_{\rx\in\cZ_\crs}q(\rx-\ru)q(\rx-\ru')
$$
we have
\begin{align*}
\hat a_\rk(\ell,\ell')
%&=\sfrac{\vol_f}{|\cB|}
%    \sum_{\atop{[\ru]\in\cB}{\ru'\in\cZ_\fin} } \hskip-5pt
%      e^{-i\ell\cdot\ru } a(\ru,\ru')e^{i\ell'\cdot\ru' }
%      e^{-i\rk\cdot(\ru- \ru')} 
%     \cr
&=\sfrac{\vol_f\vol_c}{|\cB|}
    \sum_{ \atop{\atop{[\ru]\in\cB}
                      {\ru'\in\cZ_\fin}}
                {\rx\in\cZ_\crs}}\hskip-5pt
      e^{-i(\rk+\ell)\cdot(\ru-\rx) } q(\rx-\ru)q(\rx-\ru') 
         e^{i(\rk+\ell')\cdot(\ru'-\rx) }
     \\
&=\vol_f^2
    \sum_{ \atop{\atop{[\ru]\in\cB}
                      {\ru'\in\cZ_\fin}}
                {\rx\in\cZ_\crs}}\hskip-5pt
      e^{-i(\rk+\ell)\cdot(\ru-\rx) } q(\rx-\ru)q(\ru') 
          e^{-i(\rk+\ell')\cdot\ru' }
     \\
&=\vol_f^2
    \sum_{\ru,\ru'\in\cZ_\fin}\hskip-5pt
      e^{i(\rk+\ell)\cdot\ru } q(\ru)q(\ru') 
          e^{-i(\rk+\ell')\cdot\ru' }
     \\
&=\overline{q(\rk+\ell)}\,q(\rk+\ell')
\end{align*}
\end{proof}

\begin{example}[Example \ref{exBOnaive}, continued]\label{exBOnaiveCont}
In the notation of Example \ref{exBOnaive}, the Fourier transform of $q$ is
\begin{align*}
\hat q(p)&=\sfrac{\vol_f}{\vol_c}
       \sum_{\atop{u\in\cX_\fin}{u\in\sq_0}}  e^{-i p\cdot u }
=u_{L_T}(\veps_T p_0)\prod_{\ell=1}^\sd u_{L_X}(\veps_X p_\ell)
\end{align*}
where
\begin{equation*}
u_L(\om)
=\sfrac{1}{L}\sum_{k=-\frac{L-1}{2}}^{\frac{L-1}{2}}e^{-i\om k}
=\begin{cases}\frac{1}{L}\frac{\sin\frac{L\om}{2}}{\sin\frac{\om}{2}}
                    & \text{if $\om\notin 2\pi\bbbz$}\\
                  \noalign{\vskip0.05in}
                1& \text{otherwise}
  \end{cases}
\end{equation*}

\centerline{\includegraphics{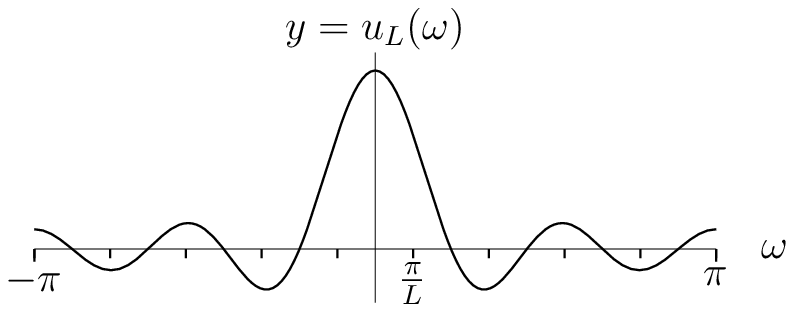}}

\end{example}

\begin{remark}\label{remBOlessnaive}
For the $q$ of Examples \ref{exBOnaive} and \ref{exBOnaiveCont}, 
which is, up to a multiplicative constant,
the characteristic function of a rectangle, the Fourier transform $\hat q(p)$
decays relatively slowly for large $p$. Choosing a smoother $q$ increases
the rate of decay of $\hat q(p)$. A convenient way to ``smooth off'' $Q$ is
to select an even $\fq\in\bbbn$ and choose $q$ to be the inverse Fourier
transform of
\begin{equation*}
\hat q(p)
=u_{L_T}(\veps_T p_0)^\fq\prod_{\ell=1}^\sd u_{L_X}(\veps_X p_\ell)^\fq
\end{equation*}
For example, when $\fq=2$, $q$ is the convolution of (a constant times)
the characteristic function of a rectangle with itself and so is a ``tent''
function. In \cite{PAR1,PAR2}, we use $\fq\ge 4$.
\end{remark}

%%%%%%%%%%%%%%%%%%%%%%%%
\subsection{Analyticity of the Fourier Transform and $L^1$--$L^\infty$ Norms}
%%%%%%%%%%%%%%%%%%%%%%
Define, for any $m\ge 0$ and $\ra:\bbbx\times\bbbx'\rightarrow\bbbc$,
with $\bbbx$ and $\bbbx'$ being any of our lattices,
\begin{equation*}
\|\ra\|_m
=\max\Big\{\sup_{y\in\bbbx}\,\vol_{X'}\!\!\!\sum_{y'\in\bbbx'} 
         e^{m|y-y'|}|\ra(y,y')|\ ,\ 
\sup_{y'\in\bbbx'}\vol_{X}\!\!\sum_{y\in\bbbx} e^{m|y-y'|}|\ra(y,y')|
\Big\}
\end{equation*}
Here $\vol_X$ and $\vol_{X'}$ is the volume of a single cell in $X$ and
$X'$, respectively.

\begin{lemma}\label{lemBOlonelinfty}
Let $a(\ru,\ru'):\cZ_\fin\times\cZ_\fin\rightarrow\bbbc$, 
$b(\ru,\rx):\cZ_\fin\times\cZ_\crs\rightarrow\bbbc$ and
$c(\rx,\ru):\cZ_\crs\times\cZ_\fin\rightarrow\bbbc$ be
translation invariant with respect to $\cZ_\crs$ and have finite
$L^1$--$L^\infty$ norms. Let $0<m''<m'<m$.
\begin{enumerate}[label=(\alph*), leftmargin=*]
\item 
If $\|a\|_m<\infty$, then, for each 
$\ell,\ell'\in\hat\cB$,  
$\hat a_\rk(\ell,\ell')$ is analytic in $|\Im \rk|<m$ and
\begin{equation*}
\sup_{|\Im \rk|<m} \big|\hat a_\rk(\ell,\ell')\big| \le \|a\|_m
\end{equation*}
\item
If, for each $\ell,\ell'\in\hat\cB$,  
$\hat a_\rk(\ell,\ell')$ is analytic in $|\Im \rk|<m$, then,
\begin{alignat*}{3}
\sup_{\ru,\ru'\in\cZ_\fin}\big|a(\ru,\ru')\big|e^{m'|\ru-\ru'|}
& \le \sfrac{1}{\vol_c} \sup_{|\Im\rk|=m'}\sum_{\ell,\ell'\in\hat\cB}
      \big| \hat a_{\rk}(\ell,\ell')\big|&
&\le \sfrac{|\cB|}{\vol_f} \sup_{\atop{|\Im\rk|=m'}{\ell,\ell'\in\hat\cB}}
      \big| \hat a_{\rk}(\ell,\ell')\big|\\
\|A\|_{m''}\le\|a\|_{m''}
&\le \sfrac{C_{m'-m''}}{\vol_c}  \hskip-3pt\sup_{|\Im\rk|=m'}
\hskip-2pt\sum_{\ell,\ell'\in\hat\cB}\hskip-7pt
     \big| \hat a_{\rk}(\ell,\ell')\big| &
&\le \sfrac{C_{m'-m''}|\cB|}{\vol_f}\hskip-5pt
    \sup_{\atop{|\Im\rk|=m'}{\ell,\ell'\in\hat\cB}}\hskip-9pt
      \big| \hat a_{\rk}(\ell,\ell')\big|
\end{alignat*}
where $A$ is the periodization of $a$ and
$
C_{m'-m''}=\vol_f\sum_{\ru\in\cZ_\fin}  e^{-(m'-m'')|\ru|}
$.

\item 
If, for each $\ell\in\hat\cB$,  
$\hat b_\rk(\ell)$ is analytic in $|\Im \rk|<m$, then,
\begin{align*}
\sup_{\atop{\ru\in\cZ_\fin}{\rx\in\cZ_\crs}}
\big|b(\ru,\rx)\big|e^{m'|\ru-\rx|}
& \le \sfrac{1}{\vol_c}  \sup_{|\Im\rk|=m'}  \sum_{\ell\in\hat\cB}
    \big| \hat b_{\rk}(\ell)\big|
\le \sfrac{1}{\vol_f} \sup_{\atop{|\Im\rk|=m'}{\ell\in\hat\cB}}
      \big| \hat b_{\rk}(\ell)\big|
\end{align*}
If, for each $\ell'\in\hat\cB$,  
$\hat c_\rk(\ell')$ is analytic in $|\Im \rk|<m$, then,
\begin{align*}
\sup_{\atop{\ru\in\cZ_\fin}{\rx\in\cZ_\crs}}
\big|c(\rx,\ru)\big|e^{m'|\rx-\ru|}
& \le \sfrac{1}{\vol_c} \sup_{|\Im\rk|=m'} \sum_{\ell'\in\hat\cB}
      \big| \hat c_{\rk}(\ell')\big|
\le \sfrac{1}{\vol_f} \sup_{\atop{|\Im\rk|=m'}{\ell'\in\hat\cB}}
      \big| \hat c_{\rk}(\ell')\big|
\end{align*}
\end{enumerate}
\end{lemma}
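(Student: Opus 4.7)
My plan is to treat part (a) as a direct estimate on the defining sum, and parts (b) and (c) as contour--shift arguments in the inverse Fourier representations already recorded in Lemma \ref{lemBOifkervar}.a and the displays immediately before \eqref{eqnPOftaction}.

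For part (a), I would allow $\rk$ to be complex with $|\Im\rk|<m$ in formula \eqref{eqnBOifkerkell}. The factor $|e^{-i\rk\cdot(\ru-\ru')}|=e^{(\Im\rk)\cdot(\ru-\ru')}$ is bounded by $e^{m|\ru-\ru'|}$, so the series is dominated termwise by $\sfrac{\vol_f}{|\cB|}\sum_{[\ru]\in\cB}\sum_{\ru'\in\cZ_\fin}|a(\ru,\ru')|e^{m|\ru-\ru'|}$. For each representative $\ru$, the inner sum is at most $\|a\|_m/\vol_f$ by definition of the norm; summing over the $|\cB|$ classes cancels the prefactor and leaves $\|a\|_m$. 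The resulting uniform convergence on compact subsets of the strip makes $\hat a_\rk(\ell,\ell')$ analytic.

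For part (b), I would substitute the inverse formula from Lemma \ref{lemBOifkervar}.a into $a(\ru,\ru')$ and shift the $\rk$--contour from $\hat\cZ_\crs$ to $\hat\cZ_\crs+i\eta$, choosing $\eta$ with $|\eta|=m'$ so that $-\eta\cdot(\ru-\ru')=-m'|\ru-\ru'|$. This is legal because part (a) provides analyticity on the strip and the integrand is periodic in the real part of $\rk$; the modulus of $e^{i(\rk+i\eta)\cdot(\ru-\ru')}$ then equals $e^{-m'|\ru-\ru'|}$. Using that $\hat\cZ_\crs$ has volume $(2\pi)^{1+\sd}/\vol_c$, a triangle inequality on the $\ell,\ell'$ sum, and the identity $|\cB|=\vol_c/\vol_f$, I arrive at both forms of the pointwise inequality. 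For the norm bound I first observe that $\|A\|_{m''}\le\|a\|_{m''}$: by Remark \ref{remBOperiodization}.b the kernel of $A$ is $\sum_\rz a(\ru,\ru'+\rz)$, and since the torus distance $|[\ru]-[\ru']|$ is at most $|\ru-\ru'-\rz|$ for every period $\rz$, folding the sum over $[\ru']\in\cX_\fin$ and $\rz$ into a single sum over $\ru''\in\cZ_\fin$ recovers $\|a\|_{m''}$. Splitting $e^{m''|\ru-\ru'|}=e^{-(m'-m'')|\ru-\ru'|}e^{m'|\ru-\ru'|}$ in the definition of $\|a\|_{m''}$, factoring out the pointwise bound already proved for $|a(\ru,\ru')|e^{m'|\ru-\ru'|}$, and recognising the remaining sum as $C_{m'-m''}$ completes the estimate.

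Part (c) is handled by exactly the same contour shift, applied instead to the inverse transforms of $b$ and $c$ displayed just before \eqref{eqnPOftaction}, now with a single sum over $\ell\in\hat\cB$ and an exponential $e^{i\rk\cdot(\ru-\rx)}$ that absorbs the $\eta$--shift identically. The main obstacle I anticipate is purely bookkeeping: keeping the factors $\vol_f$, $\vol_c$, $|\cB|$ consistent across the pointwise and norm bounds, and fixing a convention for $|\cdot|$ in the exponential weights so that the direction of the $\eta$--shift genuinely produces the claimed decay rate. Once that convention is pinned down, every step is mechanical.
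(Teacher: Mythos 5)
Your proposal is correct and follows essentially the same route as the paper: a termwise bound on \eqref{eqnBOifkerkell} for (a), a contour shift of the $\rk$--integral in the inverse formula of Lemma \ref{lemBOifkervar}.a by $i\,m'\sfrac{\ru-\ru'}{|\ru-\ru'|}$ (justified by analyticity and periodicity in $\Re\rk$) for the pointwise bound, the folding argument $\|A\|_{m''}\le\|a\|_{m''}$, the split $e^{m''|\ru-\ru'|}=e^{-(m'-m'')|\ru-\ru'|}e^{m'|\ru-\ru'|}$ producing $C_{m'-m''}$, and the identical shift for $b$ and $c$ in (c). The only quibble is that in (b) the analyticity of $\hat a_\rk(\ell,\ell')$ is a hypothesis of that part rather than a consequence of (a) (whose hypothesis $\|a\|_m<\infty$ is not assumed there), but this does not affect the argument.
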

\begin{proof} (a)
If $|\Im\rk|<m$, then
\begin{align*}
\big|\hat a_\rk(\ell,\ell')\big| 
\le \sfrac{\vol_f}{|\cB|}
     \sum_{\atop{[\ru]\in\cB}{\ru'\in\cZ_\fin}}
        \hskip-5pt
      \big|a(\ru,\ru')\big|e^{m|\ru- \ru'|} 
\le \sfrac{1}{|\cB|}
    \sum_{[\ru]\in\cB} \|a\|_m
\le \|a\|_m
\end{align*}
Analyticity in $\rk$ follows from the uniform convergence of the series
on $|\Im \rk|<m$.

\Item (b) Fix any $\ru,\ru'\in\cZ_\fin$. Set
$\rq=m'\sfrac{\ru-\ru'}{|\ru-\ru'|}$. Then
\begin{align*}
a(\ru,\ru')e^{m'|\ru-\ru'|}
&= \sum_{\ell,\ell'\in\hat\cB}
\int_{\hat\cZ_\crs}\hskip-10pt
                \hat a_\rk(\ell,\ell') e^{i(\rk-i\rq)\cdot (\ru-\ru') }
                e^{i\ell\cdot \ru } e^{-i\ell'\cdot \ru' }
                \sfrac{d^{1+\sd}\rk}{(2\pi)^{1+\sd}}\cr
&= \sum_{\ell,\ell'\in\hat\cB}
\int_{\hat\cZ_\crs}\hskip-10pt
                \hat a_{\rk+i\rq}(\ell,\ell') e^{i\rk\cdot (\ru-\ru') }
                e^{i\ell\cdot \ru } e^{-i\ell'\cdot \ru' }
                \sfrac{d^{1+\sd}\rk}{(2\pi)^{1+\sd}}
\end{align*}
where we have applied Stokes' theorem, using analyticity in $\rk$ and the fact that
\begin{equation*}
e^{i\rk\cdot (\ru-\ru') }
   \sum_{\ell,\ell'\in\hat\cB}\hat a_\rk(\ell,\ell') 
                e^{i\ell\cdot \ru } e^{-i\ell'\cdot \ru' }
=e^{i\rk\cdot(\ru-\ru')}a_\rk(\ru,\ru')
\end{equation*}
is periodic in the real part
of $\rk$ with respect to $\sfrac{2\pi}{\veps_TL_T}\bbbz
   \times\sfrac{2\pi}{\veps_XL_X}\bbbz^\sd$.
Hence
\begin{align*}
\big|a(\ru,\ru')\big|e^{m'|\ru-\ru'|}
&\le \int_{\hat\cZ_\crs}\sum_{\ell,\ell'\in\hat\cB}
               \big| \hat a_{\rk+i\rq}(\ell,\ell')\big| 
                \sfrac{d^{1+\sd}\rk}{(2\pi)^{1+\sd}}\cr
&\le \sfrac{1}{\vol_c} \sup_{\rk\in\hat\cZ_\crs}
     \sum_{\ell,\ell'\in\hat\cB}
     \big| \hat a_{\rk+i\rq}(\ell,\ell')\big|\\
&\le \sfrac{|\cB|}{\vol_f}
     \sup_{\atop{\rk\in\hat\cZ_\crs}{
            \ell,\ell'\in\hat\cB}}
      \big| \hat a_{\rk+i\rq}(\ell,\ell')\big|
\end{align*}
The second bound is obvious from
\begin{align*}
\vol_f\sum_{y'\in\cX_\fin}\big|A\big([\ru],y'\big)\big|
                                      e^{m''|[\ru]-y'|}
&=\vol_f\sum_{y'\in\cX_\fin}\bigg|
      \smsum_{\atop{\ru'\in\cZ_\fin}{[\ru']=y'}}a(\ru,\ru')\bigg|
      e^{m''|[\ru]-y'|}\\
&\le\vol_f\sum_{\ru'\in\cZ_\fin}\big|a(\ru,\ru')\big|
      e^{m''|\ru-\ru'|}
\end{align*}
(with the distance $|[\ru]-y'|$ measured in $\cX_\fin$ and the
distance $|\ru-\ru'|$ measure in $\cZ_\fin$)
and the similar bound with the roles of $\ru$ and $\ru'$ interchanged.

\Item (c) The proof is much the same as that of part (b). 

\end{proof}

\begin{lemma}\label{lemBOuniqueness}
Let $m>0$. Let, for each $\ell,\ell'\in\hat\cB$,  
$\hat b_\rk(\ell,\ell')$ be analytic in $|\Im \rk|<m$. Assume that
\begin{equation*}
\hat b_{\rk+\rp}(\ell,\ell')=\hat b_{\rk}(\ell+\rp,\ell'+\rp)
\end{equation*}
for all $p\in \sfrac{2\pi}{\veps_TL_T}\bbbz
\times \sfrac{2\pi}{\veps_XL_X}\bbbz^\sd$ and 
$\ell,\ell'\in\hat\cB$. Set
\begin{align*}
a(\ru,\ru')&=\sum_{\ell,\ell'\in\hat\cB}
\int_{\hat\cZ_\crs}\hskip-10pt
                e^{i\ell\cdot \ru }\hat b_\rk(\ell,\ell')e^{-i\ell'\cdot \ru' }
                e^{i\rk\cdot (\ru-\ru') }  
                \sfrac{d^{1+\sd}\rk}{(2\pi)^{1+\sd}} 
\end{align*}
Then $a(\ru,\ru'):\cZ_\fin\times\cZ_\fin\rightarrow\bbbc$ 
obeys the conditions of Definition \ref{defBOperiodization} and
\begin{equation*}
\hat a_{\rk}(\ell,\ell')=\hat b_{\rk}(\ell,\ell')
\end{equation*}
for all $\rk\in\bbbr\times\bbbr^\sd$ and $\ell,\ell'\in\hat\cB$.
\end{lemma}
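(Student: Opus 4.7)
I would verify, in order, (i) that $a$ is translation invariant with respect to $\cZ_\crs$; (ii) that $a$ has finite $L^1$-$L^\infty$ norm; and (iii) that $\hat a_\rk(\ell,\ell')=\hat b_\rk(\ell,\ell')$. Step (i) is immediate: substituting $\ru\mapsto\ru+\rx$ and $\ru'\mapsto\ru'+\rx$ with $\rx\in\cZ_\crs$ into the defining integral leaves $e^{i\rk\cdot(\ru-\ru')}$ alone and produces additional phases $e^{i\ell\cdot\rx}$ and $e^{-i\ell'\cdot\rx}$, both of which equal $1$ because every $\ell\in\hat\cB$ pairs with every $\rx\in\cZ_\crs$ in $2\pi\bbbz$ (this is exactly the duality pairing between the quotient groups $\cZ_\fin/\cZ_\crs$ and $\hat\cB$).

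For step (ii), I would replay the contour-shift argument of the proof of Lemma \ref{lemBOlonelinfty}(b) verbatim, with $\hat b$ replacing $\hat a$. Fix $0<m''<m'<m$ and, for $\ru\neq\ru'\in\cZ_\fin$, set $\rq=m'\sfrac{\ru-\ru'}{|\ru-\ru'|}$. The integrand
\[
K(\rk)\ :=\ e^{i\rk\cdot(\ru-\ru')}\sum_{\ell,\ell'\in\hat\cB}e^{i\ell\cdot\ru}\hat b_\rk(\ell,\ell')e^{-i\ell'\cdot\ru'}
\]
is analytic in $|\Im\rk|<m$, and the hypothesis $\hat b_{\rk+\rp}(\ell,\ell')=\hat b_\rk(\ell+\rp,\ell'+\rp)$, combined with the reindexing $\ell\mapsto\ell-\rp$, $\ell'\mapsto\ell'-\rp$ of the finite $\hat\cB$-sum, shows that $K$ is $\sfrac{2\pi}{\veps_TL_T}\bbbz\times\sfrac{2\pi}{\veps_XL_X}\bbbz^\sd$-periodic in $\Re\rk$. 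Stokes' theorem then pushes the contour to $\Im\rk=\rq$, yielding
\[
|a(\ru,\ru')|\,e^{m'|\ru-\ru'|}\ \le\ \sfrac{|\cB|}{\vol_f}\sup_{\atop{|\Im\rk|=m'}{\ell,\ell'\in\hat\cB}}\bigl|\hat b_\rk(\ell,\ell')\bigr|,
\]
with the right side finite by continuity of $\hat b_\rk$ on the compact torus. Exponential decay in $|\ru-\ru'|$ is summable on $\cZ_\fin$, so $\|a\|_{m''}<\infty$.

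With (i) and (ii) in hand, $a$ satisfies the hypotheses of Definition \ref{defBOperiodization}, so Lemma \ref{lemBOifkervar}(a) supplies the inversion formula expressing $a$ in terms of its own $\hat a_\rk(\ell,\ell')$ with exactly the same kernel structure as the definition of $a$ in terms of $\hat b_\rk(\ell,\ell')$. Subtracting,
\[
\sum_{\ell,\ell'\in\hat\cB}\int_{\hat\cZ_\crs}e^{i\ell\cdot\ru}\bigl[\hat a_\rk(\ell,\ell')-\hat b_\rk(\ell,\ell')\bigr]e^{-i\ell'\cdot\ru'}e^{i\rk\cdot(\ru-\ru')}\sfrac{d^{1+\sd}\rk}{(2\pi)^{1+\sd}}=0
\]
for all $\ru,\ru'\in\cZ_\fin$. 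Since both $\hat a_\rk$ and $\hat b_\rk$ are continuous in $\rk$ and obey the reshuffling rule of Remark \ref{remBOatwisted}, testing against the characters $e^{-i\ell_0\cdot\ru}e^{i\ell_0'\cdot\ru'}e^{-i\rk_0\cdot(\ru-\ru')}$ via the forward formula \eqref{eqnBOifkerkell}—i.e., summing $[\ru]$ over $\cB$ and $\ru'$ over $\cZ_\fin$—and using orthogonality of characters on the finite group $\cB$ together with Fourier inversion on the torus $\hat\cZ_\crs$, forces $\hat a_\rk\equiv\hat b_\rk$.

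The main obstacle is the periodicity verification for $K(\rk)$ in step (ii): one has to simultaneously use the transformation rule for $\hat b_\rk$ under $\rk\mapsto\rk+\rp$ and a change of index in the $\hat\cB$-sum, and confirm that the phase $e^{i\rp\cdot(\ru-\ru')}$ from the prefactor $e^{i\rk\cdot(\ru-\ru')}$ cancels exactly against the phases produced by this reindexing. Once this bookkeeping is done, everything else is either inspection or a direct appeal to earlier lemmas.
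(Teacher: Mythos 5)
Your outline is correct. The paper gives no argument at all for this lemma (its proof reads ``The proof is straightforward.''), so there is nothing to diverge from; what you propose is the standard filling-in. Step (i) is right (the pairing $\ell\cdot\rx\in 2\pi\bbbz$ for $\ell\in\hat\cB$, $\rx\in\cZ_\crs$), and the point you single out as the main obstacle in step (ii) --- that $K(\rk+\rp)$ picks up $e^{i\rp\cdot(\ru-\ru')}$ from the prefactor while the reindexing $\ell\mapsto\ell-\rp$, $\ell'\mapsto\ell'-\rp$ of the finite $\hat\cB$-sums produces exactly $e^{-i\rp\cdot(\ru-\ru')}$ --- is indeed the only nontrivial bookkeeping; it is the same periodicity fact used in the proof of Lemma \ref{lemBOlonelinfty}(b), so the contour shift goes through verbatim and yields the exponential bound and hence the finite $L^1$--$L^\infty$ norm. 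For step (iii), the subtract-and-test detour through Lemma \ref{lemBOifkervar}(a) is harmless but not needed: once (i) and (ii) hold you can insert the defining formula for $a$ directly into \eqref{eqnBOifkerkell}, do the $\rx\in\cZ_\crs$ sum by Fourier inversion on $\hat\cZ_\crs$ (legitimate because the exponential decay from (ii) makes the Fourier coefficients absolutely summable, which also justifies the interchange of sum and integral) and the $[\ru],[\ru']\in\cB$ sums by orthogonality; the prefactors combine as $\sfrac{\vol_f}{|\cB|}\cdot\sfrac{1}{\vol_c}\cdot|\cB|^2=1$ since $\vol_c=|\cB|\vol_f$, giving $\hat a_\rk(\ell_0,\ell_0')=\hat b_{\rk}(\ell_0,\ell_0')$ exactly, with the twisting rule of Remark \ref{remBOatwisted} handling representatives of $\rk$ modulo the dual lattice.
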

\begin{proof} The proof is straightforward.
\end{proof}  

%%%%%%%%%%%%%%%%%%
\subsection{Functions of Periodic Operators}
%%%%%%%%%%%%%%%%%%
Let $C$ be a simple, closed, positively oriented, piecewise smooth curve 
in the complex plane and denote by $\cO_C$ its interior. Denote by 
$\si(A)$ the spectrum of the bounded operator $A$ and assume 
$\si(A)\subset\cO_C$. Let $f(z)$ be analytic on the closure of $\cO_C$. 
Then, by the Cauchy integral formula,
\begin{equation}\label{eqnBOfofA}
f(A)=\sfrac{1}{2\pi i}\oint_{C}\sfrac{f(\ze)}{\ze\bbbone -A}\,d\ze
\end{equation}
and, for any $m\ge 0$,
\begin{equation}\label{eqnBOfofAbnd}
\|f(A)\|_m\le\sfrac{1}{2\pi}|C|
     \sup_{\ze\in C}|f(\ze)|
     \sup_{\ze\in C}\big\|(\ze\bbbone-A)^{-1}\big\|_m
\end{equation}

\begin{lemma}\label{lemBOfnbnd}
Let 
\begin{itemize}[leftmargin=*, topsep=2pt, itemsep=0pt, parsep=0pt]
\item
$a(\ru,\ru'):\cZ_\fin\times\cZ_\fin\rightarrow\bbbc$ 
obey the conditions of Definition \ref{defBOperiodization},
\item
$C$ be a simple, closed, positively oriented, piecewise
smooth curve in the complex plane with interior $\cO_C$,
\item
$\cO$ contain the closure of $\cO_C$ and $f:\cO\rightarrow\bbbc$ be analytic, and
\item
$0<m''<m'<m$.
\end{itemize}
Suppose that 
\begin{itemize}[leftmargin=*, topsep=2pt, itemsep=0pt, parsep=0pt]
\item
for each $\ell, \ell'\in\hat\cB$, $\hat a_\rk(\ell,\ell')$
is analytic in $|\Im \rk|<m$.
\item
for each 
$\ze\in\bbbc\setminus\cO_C$ and each $\rk$ with $|\Im \rk|<m$, the matrix 
$\big[\ze\de_{\ell,\ell'}-\hat a_\rk(\ell,\ell')\big]_{\ell, \ell'
                                                    \in\hat\cB}$ 
is invertible.
\end{itemize}
Denote by $A$ the periodization of $a$.  Then $f(A)$, defined
by \eqref{eqnBOfofA}, exists and
\begin{align*}
\|f(A)\|_{m''}
&\le \sfrac{C_{m'-m''}}{2\pi\ \vol_c} |C|\sup_{\ze\in C}|f(\ze)|
     \sup_{\atop{|\Im\rk|=m'}{\ze\in C}} 
     \sum_{\ell,\ell'\in\hat\cB}
     \big|(\ze\bbbone-\hat a_{\rk})^{-1}(\ell,\ell')\big|\\
&\le \sfrac{C_{m'-m''}|\cB|}{2\pi\ \vol_f}
     |C|\sup_{\ze\in C}|f(\ze)|
    \sup_{\atop{|\Im\rk|=m'}
               {\atop{\ell,\ell'\in\hat\cB}{\ze\in C }}
         }
      \big|(\ze\bbbone-\hat a_{\rk})^{-1}(\ell,\ell')\big|
\end{align*} 
Here $(\ze\bbbone-\hat a_{\rk})^{-1}$ refers to the inverse of the
$|\hat\cB|\times |\hat\cB|$ matrix
$\big[\ze\de_{\ell,\ell'}-\hat a_\rk(\ell,\ell')\big]_{\ell, \ell'
                                                    \in\hat\cB}$.
\end{lemma}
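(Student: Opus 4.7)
The plan is to realize the resolvent $(\ze\bbbone - A)^{-1}$ as the periodization of a function $r^{(\ze)}$ built by inverting $\ze\bbbone - \hat a_\rk$ pointwise at the Bloch level, and then to feed that periodization into \eqref{eqnBOfofA} and \eqref{eqnBOfofAbnd}. Fix $\ze\in\bbbc\setminus\cO_C$. By the invertibility hypothesis together with Cramer's rule, each entry $(\ze\bbbone - \hat a_\rk)^{-1}(\ell,\ell')$ is analytic in $\rk$ on $|\Im\rk|<m$: the determinant is analytic and nowhere zero there, while the cofactors are polynomials in the $\hat a_\rk(\ell,\ell')$. Moreover the transformation law of Remark \ref{remBOatwisted}, $\hat a_{\rk+\rp}(\ell,\ell')=\hat a_\rk(\ell+\rp,\ell'+\rp)$, says that shifting $\rk$ conjugates the matrix $\hat a_\rk$ by the permutation $\ell\mapsto\ell+\rp$ on $\hat\cB$, and this property is inherited by the inverse.

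Applying Lemma \ref{lemBOuniqueness} to $\hat b_\rk(\ell,\ell'):=(\ze\bbbone-\hat a_\rk)^{-1}(\ell,\ell')$ then produces a function $r^{(\ze)}:\cZ_\fin\times\cZ_\fin\to\bbbc$ obeying Definition \ref{defBOperiodization} with $\hat r^{(\ze)}_\rk(\ell,\ell')=(\ze\bbbone-\hat a_\rk)^{-1}(\ell,\ell')$. Let $R^{(\ze)}$ be its periodization. Since the identity on $\cH_f$ is the periodization of $\tfrac{1}{\vol_f}\de_{\ru,\ru'}$, Remark \ref{remBOperiodization}(c) identifies $(\ze\bbbone-A)R^{(\ze)}$ with the periodization of the convolution of $\tfrac{\ze}{\vol_f}\de_{\ru,\ru'}-a(\ru,\ru')$ with $r^{(\ze)}(\ru,\ru')$. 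By Lemma \ref{lemBOperiodalg}(b) the Bloch transform of this convolution is the matrix product $(\ze\bbbone-\hat a_\rk)(\ze\bbbone-\hat a_\rk)^{-1}=\de_{\ell,\ell'}$, which by Lemma \ref{lemBOperiodalg}(a) is the Bloch transform of the identity kernel; uniqueness in Lemma \ref{lemBOuniqueness} then forces the convolution itself to equal $\tfrac{1}{\vol_f}\de_{\ru,\ru'}$. Hence $(\ze\bbbone-A)R^{(\ze)}=\bbbone$ on $\cH_f$, and the reverse order works identically, so $(\ze\bbbone-A)^{-1}=R^{(\ze)}$. In particular $\si(A)\subset\cO_C$ and $f(A)$ is well-defined by \eqref{eqnBOfofA}.

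For the norm bound I substitute $(\ze\bbbone-A)^{-1}=R^{(\ze)}$ into \eqref{eqnBOfofAbnd} and bound $\|R^{(\ze)}\|_{m''}$ via Lemma \ref{lemBOlonelinfty}(b), with exponents $m''<m'<m$, applied to $r^{(\ze)}$. This gives the first displayed inequality. The second follows by majorizing $\sum_{\ell,\ell'\in\hat\cB}$ by $|\cB|^2$ times the pointwise supremum and using $|\cB|^2/\vol_c=|\cB|/\vol_f$, which holds because $\vol_c=|\cB|\vol_f$. Uniformity in $\ze\in C$ of every intermediate supremum is automatic from the continuity of $\rk\mapsto\det(\ze\bbbone-\hat a_\rk)$ on the compact set $C\times\{|\Im\rk|=m'\}$ together with the strict invertibility hypothesis. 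The only genuine conceptual step is the identification of $R^{(\ze)}$ with the resolvent; everything else is assembly of Bloch calculus already packaged in Lemmas \ref{lemBOifkervar}--\ref{lemBOlonelinfty}.
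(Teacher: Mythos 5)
Your proposal is correct and follows essentially the same route as the paper: invert $\ze\bbbone-\hat a_\rk$ fibrewise, use Lemma \ref{lemBOuniqueness} to build a kernel $r_\ze$ whose periodization is the resolvent $(\ze\bbbone-A)^{-1}$ (the paper phrases this via Lemma \ref{lemBOperiodalg} and Remark \ref{remBOperiodization}.c on $L^2(\cZ_\fin)$, you verify the two-sided inverse on $\cH_f$ directly), and then combine Lemma \ref{lemBOlonelinfty}.b with \eqref{eqnBOfofAbnd}. Your explicit verification of the analyticity (Cramer's rule) and shift-covariance hypotheses of Lemma \ref{lemBOuniqueness} is a detail the paper leaves implicit, not a different argument.
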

\begin{proof}
Each matrix element of $\ze\bbbone-\hat a_k$ is
continuous on 
\begin{equation*}
\cD=\set{(\ze,\rk)\in\bbbc^2}{\ze\in\bbbc\setminus\cO_C,\ |\Im\rk|<m}
\end{equation*}
Furthermore $\det(\ze\bbbone-\hat a_k)$ does not vanish on $\cD$.
Hence every matrix element of $\big(\ze\bbbone-\hat a_k\big)^{-1}$
is also continuous on $\cD$ and in particular is bounded on compact subsets
of $\cD$.
Set, for each for each $\ze\in\bbbc\setminus\cO_C$, and 
$\ru,\ru'\in\cZ_\fin$,
\begin{equation*}
r_\ze(\ru,\ru')=
\sum_{\ell,\ell'\in\hat\cB}
\int_{\hat\cZ_\crs}\hskip-10pt
                e^{i\ell\cdot \ru }
                (\ze\bbbone-\hat a_{\rk})^{-1}(\ell,\ell')
                e^{-i\ell'\cdot \ru' }
                e^{i\rk\cdot (\ru-\ru') }  
                \sfrac{d^{1+\sd}\rk}{(2\pi)^{1+\sd}} 
\end{equation*}
By Lemma \ref{lemBOuniqueness}, $r_\ze(\ru,\ru')$ obeys the conditions of 
Definition \ref{defBOperiodization} and
\begin{equation*}
\hat r_{\ze,\rk}(\ell,\ell') = (\ze\bbbone-\hat a_{\rk})^{-1}(\ell,\ell')
\end{equation*}
By Lemma \ref{lemBOperiodalg}, $r_\ze=(\ze\bbbone- a)^{-1}$, as operators on
$L^2(\cZ_\fin)$.
By  Remark \ref{remBOperiodization}.c, for each $\ze\in\bbbc\setminus\cO_C$, 
the periodization of $r_\ze(\ru,\ru')$ is $\big(\ze\bbbone-A\big)^{-1}$. 
In particular, $\si(A)\subset\cO_C$. By Lemma \ref{lemBOlonelinfty}.b,
\begin{align*}
\|(\ze\bbbone-A)^{-1}\|_{m''}
&\le \sfrac{C_{m'-m''}}{\vol_c} \sup_{|\Im\rk|=m'}
    \sum_{\ell,\ell'\in\hat\cB}
      \big|(\ze\bbbone-\hat a_{\rk})^{-1}(\ell,\ell')\big|\\
&\le \sfrac{C_{m'-m''}|\cB|}{\vol_f}
    \sup_{|\Im\rk|=m'\atop\ell,\ell'\in\hat\cB}
      \big|(\ze\bbbone-\hat a_{\rk})^{-1}(\ell,\ell')\big|
\end{align*}
Then, by \eqref{eqnBOfofAbnd},
\begin{align*}
\|f(A)\|_{m''}
&\le\sfrac{1}{2\pi}|C|\sup_{\ze\in C}|f(\ze)|
     \sup_{\ze\in C}\big\|(\ze\bbbone-A)^{-1}\big\|_{m''}\\
&\le \sfrac{C_{m'-m''}}{2\pi\ \vol_c} |C|\sup_{\ze\in C}|f(\ze)|
     \sup_{\atop{|\Im\rk|=m'}{\ze\in C}} 
     \sum_{\ell,\ell'\in\hat\cB}
     \big|(\ze\bbbone-\hat a_{\rk})^{-1}(\ell,\ell')\big|\\
&\le \sfrac{C_{m'-m''}|\cB|}{2\pi\ \vol_f}
     |C|\sup_{\ze\in C}|f(\ze)|
    \sup_{\atop{|\Im\rk|=m'}
               {\atop{\ell,\ell'\in\hat\cB}{\ze\in C }}
          }
      \big|(\ze\bbbone-\hat a_{\rk})^{-1}(\ell,\ell')\big|
\end{align*}
\end{proof}

%%%%%%%%%%%%%%%%%%
\subsection{Scaling of Periodized Operators}
%%%%%%%%%%%%%%%%%
Scaling plays an important role in the construction of \cite{PAR1,PAR2}.
See, for example,  \cite[Definition \defHTscaling\ and \S\chapSCscaling]{PAR1}
and \cite[(\eqnPINTlft)]{POA}. For the current abstract setting,
select scaling factors $\si_T$ and $\si_X$ and define the scaled lattices
\begin{align*}
\cZ_\fin^{(s)}
   &= \sfrac{\veps_T}{\si_T}\bbbz\times\sfrac{\veps_X}{\si_X}\bbbz^\sd &
\vol_f^{(s)}&=\sfrac{\veps_T\veps_X^\sd}{\si_T\si_X^\sd}
\\
%%%%%%
\cZ_\crs^{(s)}&=L_T\sfrac{\veps_T}{\si_T}\bbbz
                \times L_X\sfrac{\veps_X}{\si_X}\bbbz^\sd &
\vol_c^{(s)}&=\sfrac{(L_T\veps_T)(L_X\veps_X)^\sd}{\si_T\si_X^\sd}
\\
\hat\cZ_\crs^{(s)}&=\big(\bbbr/\sfrac{2\pi\si_T}{\veps_TL_T}\bbbz\big)
\times \big(\bbbr^\sd/\sfrac{2\pi\si_X}{\veps_XL_X}\bbbz^\sd\big)
\\
%%%%%%%
\cB^{(s)}
  &=\big(\sfrac{\veps_T}{\si_T}\bbbz/L_T\sfrac{\veps_T}{\si_T}\bbbz\big)\times
    \big(\sfrac{\veps_X}{\si_X}\bbbz^\sd/L_X\sfrac{\veps_X}{\si_X}\bbbz^\sd)
\cong \cX_\fin^{(s)}/\cX_\crs^{(s)}
\\
\hat\cB^{(s)}&= 
\big(\sfrac{2\pi\si_T}{L_T\veps_T}\bbbz/\sfrac{2\pi\si_T}{\veps_T}\bbbz\big)
   \times
\big(\sfrac{2\pi\si_X}{L_X\veps_X}\bbbz^\sd/
                \sfrac{2\pi\si_X}{\veps_X}\bbbz^\sd\big)
\end{align*}
The map  $\bbbl(\tau,\bx)=(\si_T\tau,\si_X\bx)$ gives bijections
\begin{equation*}
\bbbl:\cZ_\fin^{(s)}\rightarrow \cZ_\fin\qquad
\bbbl:\cZ_\crs^{(s)}\rightarrow \cZ_\crs\qquad
\bbbl:\cB^{(s)}\rightarrow \cB
\end{equation*}
$\bbbl$ induces linear bijections 
$\bbbl_*:L^2\big(\cZ_\fin^{(s)}\big)\rightarrow L^2\big(\cZ_\fin\big)$
and
$\bbbl_*:L^2\big(\cZ_\crs^{(s)}\big)\rightarrow L^2\big(\cZ_\crs\big)$
by $\bbbl_*(\al)(\bbbl \ru) = \al(\ru)$.
Observe that
\begin{align*}
\<\bbbl_*\al,\bbbl_*\be\>_f
= \si_T\si_X^\sd\<\al,\be\>_f^{(s)} \qquad
\<\bbbl_*\al,\bbbl_*\be\>_c
= \si_T\si_X^\sd\<\al,\be\>_c^{(s)} \qquad
\end{align*}

\begin{lemma}\label{lemPoPscaling}
Let $a:L^2\big(\cZ_\fin\big)\rightarrow L^2\big(\cZ_\fin\big)$ have 
kernel $a(\ru,\ru')$.
\begin{enumerate}[label=(\alph*), leftmargin=*]
\item 
The kernel of $\bbbl_*^{-1} a\,\bbbl_*$ is
\begin{equation*}
a^{(s)}(\rv,\rv')
=\si_T\si_X^\sd\,a\big(\bbbl\rv,\bbbl\rv'\big)
\end{equation*}
\item 
The Fourier transform of the kernel of $\bbbl_*^{-1} a\bbbl_*$ is
\begin{align*}
\hat a_\rk^{(s)}(\ell,\ell')
&=\hat a_{\bbbl^{-1}\rk}(\bbbl^{-1}\ell,\bbbl^{-1}\ell')
\qquad \text{for }\rk\in\bbbr\times\bbbr^\sd,
\quad \ell,\ell'\in\hat\cB^{(s)}
\end{align*}
\item  
If $m\ge\max\big\{\sfrac{1}{\si_T},\sfrac{1}{\si_X}\big\}m_s$,
then
$
\|a^{(s)}\|_{m_s} \le \|a\|_m
$.
\end{enumerate}
\end{lemma}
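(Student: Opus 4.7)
The plan is to unwind each part directly from the definitions, using in each case a change of variables induced by $\bbbl$. For part (a), I would start with the defining relation $(\bbbl_*\al)(\bbbl\ru)=\al(\ru)$, so that $(\bbbl_*^{-1}\be)(\ru)=\be(\bbbl\ru)$. Apply $\bbbl_*^{-1}a\bbbl_*$ to an arbitrary $\al\in L^2(\cZ_\fin^{(s)})$ and evaluate at $\rv\in\cZ_\fin^{(s)}$: the expression becomes $\vol_f\sum_{\ru'\in\cZ_\fin}a(\bbbl\rv,\ru')(\bbbl_*\al)(\ru')$. Substituting $\ru'=\bbbl\rv'$ (which bijects $\cZ_\fin$ with $\cZ_\fin^{(s)}$), this equals $\vol_f\sum_{\rv'\in\cZ_\fin^{(s)}}a(\bbbl\rv,\bbbl\rv')\al(\rv')$. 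Comparing with the kernel prescription $\vol_f^{(s)}\sum_{\rv'}a^{(s)}(\rv,\rv')\al(\rv')$ and using $\vol_f/\vol_f^{(s)}=\si_T\si_X^\sd$ gives (a).

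For part (b), I would substitute the formula from (a) into the explicit representation
\[
\hat a^{(s)}_\rk(\ell,\ell')=\sfrac{\vol_f^{(s)}}{|\cB^{(s)}|}\sum_{[\rv]\in\cB^{(s)},\,\rv'\in\cZ_\fin^{(s)}}e^{-i\ell\cdot\rv}\,a^{(s)}(\rv,\rv')\,e^{i\ell'\cdot\rv'}\,e^{-i\rk\cdot(\rv-\rv')}
\]
from \eqref{eqnBOifkerkell}. The prefactor simplifies to $\vol_f/|\cB|$ since $|\cB^{(s)}|=|\cB|=L_TL_X^\sd$ and $\vol_f^{(s)}\si_T\si_X^\sd=\vol_f$. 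Then apply the change of variables $\ru=\bbbl\rv$, $\ru'=\bbbl\rv'$, using the key identity $\ell\cdot\rv=(\bbbl^{-1}\ell)\cdot(\bbbl\rv)=(\bbbl^{-1}\ell)\cdot\ru$, which holds because $\bbbl$ is diagonal; the same identity applied to $\ell'$ and $\rk$ reorganizes the exponentials into exactly the integrand defining $\hat a_{\bbbl^{-1}\rk}(\bbbl^{-1}\ell,\bbbl^{-1}\ell')$. One should also check that $\bbbl^{-1}$ maps $\hat\cB^{(s)}\to\hat\cB$, which is clear from the lattice definitions.

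For part (c), apply the definition of $\|\cdot\|_{m_s}$ to $a^{(s)}$, plug in the formula from (a), and change variables $\ru=\bbbl\rv$, $\ru'=\bbbl\rv'$. The factor $\si_T\si_X^\sd$ from $a^{(s)}$ combines with $\vol_f^{(s)}$ to produce $\vol_f$, so the bound reduces to
\[
\sup_{\ru\in\cZ_\fin}\vol_f\sum_{\ru'\in\cZ_\fin}e^{m_s|\bbbl^{-1}\ru-\bbbl^{-1}\ru'|}\,|a(\ru,\ru')|
\]
and the analogous expression with the sup on the other variable. The crux is then the elementary inequality $|\bbbl^{-1}\rz|\le\max\{\sfrac{1}{\si_T},\sfrac{1}{\si_X}\}\,|\rz|$ for $\rz\in\bbbr^{1+\sd}$, which follows coordinatewise since $\bbbl$ is diagonal with entries $\si_T,\si_X,\ldots,\si_X$. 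Under the hypothesis $m\ge\max\{\sfrac{1}{\si_T},\sfrac{1}{\si_X}\}m_s$, this upgrades $e^{m_s|\bbbl^{-1}\ru-\bbbl^{-1}\ru'|}\le e^{m|\ru-\ru'|}$, and the desired inequality $\|a^{(s)}\|_{m_s}\le\|a\|_m$ is immediate.

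I expect no serious obstacle: the only place requiring care is the bookkeeping of the volume factors and of the identification $\vol_f^{(s)}\,\si_T\si_X^\sd=\vol_f$, which has to be used consistently in all three parts.
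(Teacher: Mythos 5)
Your proposal is correct and follows essentially the same route as the paper's proof: part (a) by applying the operator to a test function and changing variables $\ru'=\bbbl\rv'$, part (b) by substituting (a) into \eqref{eqnBOifkerkell} and using the scale invariance of the dot products under $\bbbl$, and part (c) by the same change of variables together with $e^{m_s|\bbbl^{-1}\ru-\bbbl^{-1}\ru'|}\le e^{m|\ru-\ru'|}$. The volume-factor bookkeeping you flag ($\vol_f^{(s)}\si_T\si_X^\sd=\vol_f$, $|\cB^{(s)}|=|\cB|$) is exactly what the paper uses.
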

\begin{proof} 
(a) For $\al\in L^2\big(\cZ_\fin^{(s)}\big)$ and $\rv\in\cZ_\fin^{(s)}$,
\begin{align*}
(\bbbl_*^{-1} a\bbbl_*\,\al)(\rv)
%&=(a\bbbl_*\,\al)(\bbbl\rv) \\
&=\vol_f \sum_{\ru'\in\cZ_\fin}
           \,a(\bbbl\rv,\ru')\,\big(\bbbl_*\,\al\big)(\ru')\\
&=\vol_f \sum_{\ru'\in\cZ_\fin}
           \,a(\bbbl\rv,\ru')\,\al(\bbbl_*^{-1}\ru')\cr
&=\vol_f^{(s)} \sum_{\rv'\in\cZ_\fin^{(s)}}
           \si_T\si_X^\sd\,a(\bbbl\rv,\bbbl\rv')\,\al(\rv')
\end{align*}

\Item (b) 
By \eqref{eqnBOifkerkell} and part (a),
\begin{align*}
\hat a_\rk^{(s)}(\ell,\ell')
%&=\sfrac{\vol_f^{(s)}}{|\hat\cB^{(s)}|}
%    \sum_{\atop{[\rv]\in\cB^{(s)}}{\rv'\in\cZ_\fin^{(s)}}}  \hskip-5pt
%      e^{-i\ell\cdot\rv } a^{(s)}(\rv,\rv')e^{i\ell'\cdot\rv' }
%      e^{-i\rk\cdot(\rv- \rv')} 
%     \\
&=\sfrac{\vol_f}{|\hat\cB|}
    \sum_{\atop{[\rv]\in\cB^{(s)}}{\rv'\in\cZ_\fin^{(s)}}}  \hskip-5pt
      e^{-i\ell\cdot\rv } a\big(\bbbl\rv,\bbbl\rv'\big)e^{i\ell'\cdot\rv' }
      e^{-i\rk\cdot(\rv- \rv')} 
     \\
&=\sfrac{\vol_f}{|\hat\cB|}
    \sum_{\atop{[\ru]\in\cB}{\ru'\in\cZ_\fin}}  \hskip-5pt
     e^{-i(\bbbl^{-1}\ell)\cdot\ru } a(\ru,\ru')e^{i(\bbbl^{-1}\ell')\cdot\ru' }
      e^{-i(\bbbl^{-1}\rk)\cdot(\ru- \ru')} 
     \\
&=\hat a_{\bbbl^{-1}\rk}(\bbbl^{-1}\ell,\bbbl^{-1}\ell')
\end{align*}

\Item (c)
This part follows from the inequality
\begin{align*}
\sup_{\rv\in\cZ_\fin^{(s)}}\vol_f^{(s)}\!\!\sum_{\rv'\in\cZ_\fin^{(s)}} 
                             e^{m_s|\rv-\rv'|}|a^{(s)}(\rv,\rv')|
&=\sup_{\rv\in\cZ_\fin^{(s)}}\vol_f^{(s)}\!\!\sum_{\rv'\in\cZ_\fin^{(s)}} 
                     e^{m_s|\rv-\rv'|}\si_T\si_X^\sd|a(\bbbl\rv,\bbbl\rv')|\\
&=\sup_{\ru\in\cZ_\fin}\vol_f\!\!\sum_{\ru'\in\cZ_\fin} 
                  e^{m_s|\bbbl^{-1}\ru-\bbbl^{-1}\ru'|}|a(\ru,\ru')|\\
&\le\sup_{\ru\in\cZ_\fin}\vol_f\!\!\sum_{\ru'\in\cZ_\fin} 
                  e^{m|\ru-\ru'|}|a(\ru,\ru')|
\end{align*}
and the corresponding inequality with $\rv$ summed over and $\rv'$
supped over.
\end{proof}

More generally,

\begin{lemma}\label{lemPoPscalingCrs}
Let $b:L^2\big(\cZ_\crs\big)\rightarrow L^2\big(\cZ_\fin\big)$ and
$c:L^2\big(\cZ_\fin\big)\rightarrow L^2\big(\cZ_\crs\big)$ have 
kernels $b(\ru,\rx)$ and $c(\rx,\ru)$ respectively.
\begin{enumerate}[label=(\alph*), leftmargin=*]
\item 
The kernels of $\bbbl_*^{-1} b\,\bbbl_*$ and 
$\bbbl_*^{-1} c\,\bbbl_*$  are
\begin{align*}
b^{(s)}(\rv,\rx)
=\si_T\si_X^\sd\,b\big(\bbbl\rv,\bbbl\rx\big)\qquad
%%%
c^{(s)}(\rx,\rv)
=\si_T\si_X^\sd\,c\big(\bbbl\rx,\bbbl\rv\big)
\end{align*}
\item 
The Fourier transform of the kernels of 
$\bbbl_*^{-1} b\,\bbbl_*$ and $\bbbl_*^{-1} c\,\bbbl_*$ are
\begin{align*}
\hat b_\rk^{(s)}(\ell)
=\hat b_{\bbbl^{-1}\rk}(\bbbl^{-1}\ell) \qquad
%%%%
\hat c_\rk^{(s)}(\ell')
=\hat c_{\bbbl^{-1}\rk}(\bbbl^{-1}\ell')
\qquad \text{for }\rk\in\bbbr\times\bbbr^\sd,
\quad \ell,\ell'\in\hat\cB^{(s)}
\end{align*}
\item 
If $m\ge\max\big\{\sfrac{1}{\si_T},\sfrac{1}{\si_X}\big\}m_s$,
then
\begin{align*}
\|b^{(s)}\|_{m_s} \le \|b\|_m\qquad
\|c^{(s)}\|_{m_s} \le \|c\|_m
\end{align*}
\end{enumerate}
\end{lemma}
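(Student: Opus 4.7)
\medskip

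\noindent\textbf{Proof proposal.} The plan is to mirror the proof of Lemma \ref{lemPoPscaling} part by part, the only new feature being that $\bbbl_*$ must now be applied on the two different spaces $L^2\big(\cZ_\fin\big)$ and $L^2\big(\cZ_\crs\big)$. Since $\bbbl$ restricts to a bijection $\cZ_\crs^{(s)}\to\cZ_\crs$ with $\vol_c=\si_T\si_X^\sd\vol_c^{(s)}$ (exactly as $\vol_f=\si_T\si_X^\sd\vol_f^{(s)}$), every change-of-variables step that worked in the symmetric case goes through verbatim in this asymmetric setting.

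For part (a), I would compute $(\bbbl_*^{-1} b\,\bbbl_*)\al$ for $\al\in L^2\big(\cZ_\crs^{(s)}\big)$ by writing
\begin{equation*}
(\bbbl_*^{-1} b\,\bbbl_*\,\al)(\rv)
=\vol_c\sum_{\rx\in\cZ_\crs} b(\bbbl\rv,\rx)\al(\bbbl^{-1}\rx)
=\vol_c^{(s)}\sum_{\ry\in\cZ_\crs^{(s)}} \si_T\si_X^\sd b(\bbbl\rv,\bbbl\ry)\al(\ry),
\end{equation*}
which reads off $b^{(s)}(\rv,\rx)=\si_T\si_X^\sd\,b(\bbbl\rv,\bbbl\rx)$. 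The computation for $c^{(s)}$ is entirely analogous, now summing over $\rv\in\cZ_\fin^{(s)}$ against $\vol_f^{(s)}$.

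For part (b), I would substitute the expression for $b^{(s)}$ from part (a) into the definition \eqref{eqnBOasymmetric} of $\hat b^{(s)}_\rk(\ell)$, then change variables $\ru=\bbbl\rv$, $\rx=\bbbl\ry$ (or rather the representative versions in $\cB$ versus $\cB^{(s)}$ and in $\cZ_\crs$ versus $\cZ_\crs^{(s)}$), and use the identities $\ell\cdot\rv=(\bbbl^{-1}\ell)\cdot\ru$, $\rk\cdot\rx=(\bbbl^{-1}\rk)\cdot\ru$, together with the fact that $\bbbl$ sends $\cB^{(s)}\to\cB$ bijectively. The prefactors collapse because $\vol_f^{(s)}|\cB^{(s)}|^{-1}$ equals $\vol_f|\cB|^{-1}$ up to the Jacobian that is exactly cancelled by the $\si_T\si_X^\sd$ from $b^{(s)}$. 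The outcome is the claimed identity $\hat b^{(s)}_\rk(\ell)=\hat b_{\bbbl^{-1}\rk}(\bbbl^{-1}\ell)$, and similarly for $\hat c^{(s)}_\rk(\ell')$.

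Part (c) is the most mechanical. I would write out $\sup_{\rv}\vol_c^{(s)}\sum_{\rx\in\cZ_\crs^{(s)}} e^{m_s|\rv-\rx|}|b^{(s)}(\rv,\rx)|$, substitute $b^{(s)}(\rv,\rx)=\si_T\si_X^\sd b(\bbbl\rv,\bbbl\rx)$, and change variables to $\ru=\bbbl\rv$, $\rx'=\bbbl\rx$; the factor $\si_T\si_X^\sd$ converts $\vol_c^{(s)}$ into $\vol_c$, and $|\bbbl^{-1}\ru-\bbbl^{-1}\rx'|\le\max\{\sfrac{1}{\si_T},\sfrac{1}{\si_X}\}|\ru-\rx'|$ lets me replace $m_s|\rv-\rx|$ by $m|\ru-\rx'|$ under the hypothesis on $m$. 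The corresponding bound with the roles of the two arguments swapped finishes $\|b^{(s)}\|_{m_s}\le\|b\|_m$, and the argument for $c^{(s)}$ is identical. I do not anticipate any real obstacle; the one small point to be careful about is keeping track of which volume factor ($\vol_f$ versus $\vol_c$) appears in each sum, but once the Jacobian $\si_T\si_X^\sd$ is absorbed correctly everything reduces to the scalar identity already used in Lemma \ref{lemPoPscaling}.
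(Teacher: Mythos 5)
Your proposal is correct and is exactly the argument the paper intends: Lemma \ref{lemPoPscalingCrs} is stated without proof precisely because it follows by repeating the change-of-variables computation of Lemma \ref{lemPoPscaling} (parts (a), (b), (c)) with the coarse-lattice volume $\vol_c=\si_T\si_X^\sd\vol_c^{(s)}$ playing the role of $\vol_f$ where the summation runs over $\cZ_\crs$, which is what you do. The bookkeeping of the Jacobian against $\vol_f^{(s)}$, $\vol_c^{(s)}$, the identity $\rk\cdot\bbbl^{-1}\ru=(\bbbl^{-1}\rk)\cdot\ru$, and the bound $|\bbbl^{-1}\ru-\bbbl^{-1}\ru'|\le\max\{\sfrac{1}{\si_T},\sfrac{1}{\si_X}\}|\ru-\ru'|$ are all handled correctly.
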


\newpage
%%%%%%%%%%%%%%%%%%%%%%%%%%%%%%%%%%%%%%
\bibliographystyle{plain}
\bibliography{refs}
%%%%%%%%%%%%%%%%%%%%%%%%%%%%%%%%%%%%%

%\printIssueCount
\end{document}